
%
\documentclass[10pt,journal]{IEEEtran}

\ifCLASSOPTIONcompsoc

  \usepackage[nocompress]{cite}
\else
  \usepackage{cite}
\fi

\ifCLASSINFOpdf
\else
\fi


\usepackage{calc}

\usepackage{color}
\usepackage{graphicx}
\usepackage{amsmath}
\usepackage{amssymb}
\usepackage{verbatim}
\usepackage{array}
\usepackage{textcomp}
\usepackage{slashbox}
\usepackage{pifont}
\usepackage[ruled,vlined]{algorithm2e}

\usepackage{amsmath}
\usepackage{amsthm}
\usepackage{amssymb}
\usepackage{multicol}
\usepackage{multirow}
\usepackage{graphicx}
\usepackage[mathscr]{euscript}
\usepackage{multirow}
\usepackage{centernot}
\usepackage{listliketab}
\usepackage{cutwin}
\usepackage{verbatim}
\usepackage{slashbox}
\usepackage{color}
 \usepackage{multirow}
 \usepackage{wrapfig}
 \usepackage{bm}

\usepackage{array}

\makeatletter
\renewcommand{\maketag@@@}[1]{\hbox{\m@th\normalsize\normalfont#1}}%
\makeatother

\newtheorem{theorem}{\bf Theorem}
\newtheorem{lemma}{\bf Lemma}

\newtheorem{proposition}{\bf Proposition}
\newtheorem{remark}{\bf Remark}

\newtheorem{definition}{\bf Definition}

\newcommand\T{\rule{0pt}{2.1ex}}
\newcommand\B{\rule[-0.7ex]{0pt}{0pt}}

\DeclareMathOperator*{\argmax}{arg\,max}
\DeclareMathOperator*{\argmin}{arg\,min}
\DeclareMathOperator*{\avg}{avg\,}

\newenvironment{customthm}[1]
  {\innercustomthm}
  {\endinnercustomthm}

\newenvironment{customprop}[1]
  {\innercustomprop}
  {\endinnercustomprop}

\pdfinfo{
/Title (Modeling Spread of Preferences in Social Networks for Sampling-based Preference Aggregation)
/Subject (Journal version)
/Author (Swapnil Dhamal)
/Keywords (Social Networks, Preference Aggregation, Representatives selection, Sampling, Elections.)
}
\begin{document}
\title{Modeling Spread of Preferences in Social Networks for Sampling-based Preference Aggregation}
\author{
Swapnil Dhamal, Rohith D. Vallam, and Y. Narahari
\thanks{

The original version of this paper is accepted for publication in IEEE Transactions on Network Science and Engineering. The copyright for this article belongs to IEEE.
DOI 10.1109/TNSE.2017.2772878

Contact author: Swapnil Dhamal (swapnil.dhamal@gmail.com).
S.  Dhamal and R. D. Vallam were with the Department
of Computer Science and Automation, Indian Institute of Science, Bangalore, when most of the work was done.
A part of the work was done when S. Dhamal was with T\'el\'ecom SudParis, CNRS, Universit\'e Paris-Saclay, France, and R. D. Vallam was with IBM India Research Labs, Bangalore.
Y. Narahari is with the Department
of Computer Science and Automation, Indian Institute of Science, Bangalore.
}
}

\IEEEtitleabstractindextext{%
\begin{abstract}
Given a large population, it is an intensive task to gather individual preferences over a set of alternatives and arrive at an aggregate or collective preference of the population. We show that social network underlying the population can be harnessed to accomplish this task effectively, by sampling preferences of a small subset of representative nodes. We first develop a Facebook app to create a dataset consisting of preferences of nodes and the underlying social network, using which, we develop models that capture how preferences are distributed among nodes in a typical social network. We hence propose an appropriate objective function for the problem of selecting best representative nodes. We devise two algorithms, namely, Greedy-min which provides a performance guarantee for a wide class of popular voting rules, and Greedy-sum which exhibits excellent performance in practice. We compare the performance of these proposed algorithms against random-polling and popular centrality measures, and provide a detailed analysis of the obtained results. Our analysis suggests that selecting representatives using social network information is advantageous for aggregating preferences related to personal topics (e.g., lifestyle), while random polling with a reasonable sample size is good enough for aggregating preferences related to social topics (e.g., government policies).
\end{abstract}

\begin{IEEEkeywords}
Social networks, preference aggregation, representatives selection, sampling, elections.
\end{IEEEkeywords}}

\maketitle

\IEEEdisplaynontitleabstractindextext

\IEEEpeerreviewmaketitle

{\section{Introduction}}
\label{sec:intro_pasn}

{T}{here} are several scenarios such as elections, opinion polls, public project initiatives, funding decisions, etc., where a population (society) faces a number of alternatives. In such scenarios, the population's collective preference over the given alternatives is of importance.
Ideally, one would want to obtain the preferences of all the individuals in the population and aggregate them so as to represent the population's preference. These individuals can be termed as `voters' in this context. This process of computing an aggregate preference over a set of alternatives, given individual preferences, is termed {\em preference aggregation} (a well-studied topic in social choice theory).
It is generally assumed that the preferences of all the voters are known.
In real-world scenarios, however, it may not be feasible to gather the individual preferences of all the voters owing to  factors such as their lack of interest to provide prompt, truthful, well-informed, and well-thought out preferences over the given alternatives. 
As an immediate example,
consider a company desiring to launch a future product based on its customer feedback. However, very few customers might be willing to 
devote the needed effort to promptly provide a useful feedback.
Further, even if we manage to gather preferences of all the voters, obtaining an aggregate preference may not be feasible owing to computational issues.

 Owing to the difficulty involved in obtaining individual preferences of an entire population and in computing the aggregate preference,
 an attractive approach would be to select a subset of voters 
 whose preferences reflect the population's preferences,
 and hence incentivize only those voters to report their preferences.
 We refer to these voters as {\em representatives}.
 To determine such representatives, this work proposes ideas for harnessing additional information regarding the population of voters: the underlying social network.
 It has been established based on a large number of empirical evidences over the years that, there is a significant correlation between the preferences of  voters and the underlying social network; this can be attributed to the {\em homophily} phenomenon \cite{networkscrowdsmarkets}. 
 There have been several efforts for modeling homophily and strength of ties in social networks \cite{mcpherson2001birds,xiang2010modeling}.
 The social network and the involved tie strengths (similarities among nodes) could thus give additional information regarding the individual preferences.
 
 In the context of social networks, we use `nodes' to represent voters and `neighbors' to represent their connections.
 
 \vspace{3mm}
 \subsection{Overview of the Problem and Solution Approach}
 \label{sec:overview}
 
 We now informally describe the problem addressed in this paper.
 Given a population and a topic (such as political party) for which alternatives need to be ranked, let $p_A$ be the collective (or aggregate) preference of the population computed using a certain voting (or aggregation) rule. Note that unless we have the preferences of the nodes in the population, $p_A$ is indeterminate. Suppose instead of obtaining preferences of all the nodes, we obtain preferences of only a subset of nodes and compute their collective preference, say $p_B$. The problem we study is to find a subset of a certain cardinality $k$ such that $p_B$ is as close as possible to $p_A$. For determining how close $p_B$ is to $p_A$, it is required that we deduce $p_A$ (since it is unknown to us). To do this, we need an underlying model for deducing the preferences of nodes, with which we can compute an estimate of $p_A$ using the voting rule. Motivated by the correlation that exists between the preferences of nodes and the underlying social network, we propose  a model that captures how preferences are distributed in the network.
 Such a model would enable us to deduce the individual preferences of all the nodes, with the knowledge of preferences of a subset of nodes.
 With such a model in place, we would be able to estimate how close $p_B$ is to $p_A$, and also find a subset of nodes for which this `closeness' is optimal.

 The problem of determining the best representatives can thus be classified into two subproblems: (a) deducing how preferences are distributed in the population and (b) finding a subset of nodes whose collective preference closely resembles or approximates the collective preference of the population. This paper addresses these two subproblems.

 \subsection{Preliminaries}
 \label{sec:prelim_pasn}
 
 We now provide some preliminaries on preference aggregation, required throughout the paper.
 Given a set of alternatives, 
 we refer to a ranked list of alternatives 
 as a {\em preference} and the multiset consisting of the individual preferences as {\em preference profile}.
 For example, if the set of alternatives is $\{X,Y,Z\}$ and node $i$ prefers $Y$ the most and $X$ the least, $i$'s preference 
 is written as
 $(Y,Z,X)_i$.
 Suppose node $j$'s preference is $(X,Y,Z)_j$, then the preference profile of the population $\{i,j\}$ is $\{(Y,Z,X),(X,Y,Z)\}$.
 A widely used measure of dissimilarity between two preferences is {\em Kendall-Tau distance} which counts the number of pairwise inversions with respect to the alternatives.
 Given $r$ alternatives,
 the normalized Kendall-Tau distance can be obtained by dividing this distance by \begin{scriptsize}$\dbinom{r}{2}$\end{scriptsize}, the maximum distance between  any two preferences on $r$ alternatives.
 For example, the Kendall-Tau distance between preferences $(X,Y,Z)$ and $(Y,Z,X)$ is 2, since two pairs, $\{X,Y\}$ and $\{X,Z\}$, are inverted between them. The normalized Kendall-Tau distance is $2/$\begin{scriptsize}$\dbinom{3}{2}$\end{scriptsize}. %

 An {\em aggregation rule} takes a preference profile as input and outputs the {\em aggregate preference},
 which aims to reflect the collective opinion of all the nodes. We do not assume any tie-breaking rule in order to avoid bias towards any particular alternative while determining the aggregate preference.
 So an aggregation rule may output multiple aggregate preferences (that is, aggregation rule is a correspondence).
 A survey of voting rules
 and their extensions to
  aggregation rules,
 can be found in \cite{brandt2012computational}.
 We consider several aggregation rules for our study, 
 namely, Bucklin, Smith set, Borda, Veto, Minmax (pairwise opposition), Dictatorship, Random Dictatorship, Schulze, Plurality, Kemeny, and Copeland.

 \subsection{Relevant Work}
 \label{sec:relevant_pasn}
 
 This work is at the interface of social network analysis and voting theory. Specifically, we adapt a model of opinion dynamics in social networks, so as to model how preferences are distributed in a network. Using the adapted model, we sample a subset of nodes which could act as representatives of the network.
 The research topics of relevance to our work are thus, relation between social networks and voting, models of opinion dynamics, and network sampling.
 We now present the relevant literature and position our work.

 \vspace{3mm}
 \subsubsection{Social networks and voting}
 The pioneering Columbia and Michigan political voting research 
 \cite{sheingold1973social} emphasizes on the importance of the underlying social network. 
 It has been observed that the social network has higher impact on one's political party choice than background attributes like class or ethnicity~\cite{burstein1976social}
 as well as
 religion, education, and social status~\cite{Nieuwbeerta2000313}. 
 It has been argued that 
 interactions in social networks have a strong, though often overlooked, influence on voting, since interactions allow an individual to gather information beyond personal resource constraints
 \cite{McClurg2006electoral,zuckerman2005social}.
 The impact of social networks has also been compared with that of mass media communication with respect to voting choices, where it is observed that social discussions outweigh the media effect \cite{beck2002calculus}, and that both the effects should be studied together \cite{campus2008social}.
 On the other hand, it has also been argued via a maximum likelihood approach to political voting, that it is optimal to ignore the network structure~\cite{conitzer2012should}.
 It is also not uncommon to observe highly accurate predictions about election outcomes by soliciting opinions of randomly chosen voters.

 There are 
 supporting arguments for both views (regarding whether or not social network plays a role in voting); so one of the goals of this work is to 
 identify conditions under which social network visibly plays a role.
 As we will see,
 our work suggests that social network and homophily indeed play a critical role in voting related to personal topics (such as lifestyle). However,
 for voting related to social topics (such as government policies), 
 factors which are external to the network and common to all nodes (such as mass media channels) could play a strong role,
 which would allow ignoring social network if the sample size is reasonable.

  \vspace{3mm}
 \subsubsection{Models of opinion dynamics}
 \label{sec:relevant_diffusion}

 Opinion dynamics is the process of development of opinions in a population over time, primarily owing to the information exchanged through interactions.
 Several models of opinion dynamics in networks have been studied in the literature
 \cite{acemoglu2011opinion,lorenz2007continuous,networkscrowdsmarkets}, such as
  DeGroot,
  Friedkin-Johnsen,
 Independent Cascade, Linear Threshold, etc.
 Since opinions and networks are key factors in opinion dynamics as well as our setting, it is natural to consider the applicability of opinion dynamics techniques to our setting.
 
 One of the primary factors that distinguishes our setting from opinion dynamics
 is that, the edge weights in opinion dynamics represent influence weights or probabilities, while the edge weights in our setting represent probability distributions over similarity values (homophilic tie strengths).
 So if one intends to use the models of opinion dynamics for the problem under consideration, 
 they need to be appropriately adapted.
 In this paper, we propose an adaptation of the Independent Cascade model; instead of deducing the probability of a node getting influenced (given edge probabilities and a subset of nodes which are already influenced), we deduce the preference of a node (given homophilic tie strengths and the preferences of a subset of nodes).

  \vspace{3mm}
 \subsubsection{{Network sampling}}
 \label{sec:relevant_sampling}

 A taxonomy of different graph
 sampling objectives and approaches is provided in \cite{hu2013survey}.
 Problems in
 developing a general theory of network sampling are discussed in \cite{granovetter1976network}.
 A discussion 
  on which sampling method to use and how small the sample size can be, is provided in \cite{leskovec2006sampling}; there it is also experimentally observed that simple uniform random
 node selection performs surprisingly well.
 The problem of finding a subset of users to statistically represent the
 original social network has also been studied \cite{tang2015sampling,sun2013learning}.

 Our setting principally deviates from the existing literature in that, as stated earlier, we consider
 edges having probability distributions over similarity values (instead of weights or probabilities). 
 Moreover, we address the problem of sampling best representative nodes based on 
 a ranked list of alternatives, 
 and
 undertake a study to determine the performance of sampling-based approach for preference aggregation with respect to a number of voting rules.

 Furthermore, the problem of node selection based on ranked list of alternatives has been studied by utilizing the attributes of voters and alternatives~\cite{grum2013uai}, however, the underlying network is ignored.

 \subsection{Our Contributions}
 \label{sec:contrib_pasn}
 
 Our specific contributions are as follows.

 \begin{itemize}
 [\setlength{\labelwidth}{\widthof{\textbullet}}
 \setlength{\labelsep}{7.5pt}
  \setlength{\IEEElabelindent}{0pt}
     \IEEEiedlabeljustifyl ]
     
     \setlength\itemsep{.5em}
     
 \item 
 We develop a Facebook app to create a dataset consisting of the preferences of nodes over a variety of topics and the underlying social network.
 With this dataset, 
 we propose and validate a number of simple yet faithful models with the aim of capturing how preferences are distributed among nodes in a social network,
 while having probabilistic information regarding tie similarities.

 \item 
 We formulate an objective function for the problem of determining best representative nodes in a social network,
 and hence 
 propose a robustness property for aggregation rules
 to justify appropriateness of 
 two surrogate objective functions for computational purpose.

 \item 
 We provide a guarantee on the performance of one of the algorithms (Greedy-min), 
 and study desirable properties of a second algorithm (Greedy-sum) which exhibits excellent performance in practice. 
 We compare these algorithms with the popular random polling method  and  well-known centrality measures over a number of voting rules, and analyze the results.
 
 \item
 We present insights on the effectiveness of our approach 
 with respect to personal and social topics.

 \end{itemize}

 We believe the results in this paper offer a rigorous model for capturing spread of preferences in a social network, and present effective methods for sampling-based preference aggregation using social networks.

  \vspace{2mm}
 \section{Modeling Spread of Preferences in a \\Social Network}
 \label{sec:modeling}
 
 In this section, we first introduce the idea of modeling the spread of preferences in a social network, with an analogy to modeling opinion dynamics. We then describe the dataset created through our Facebook app and hence develop a number of simple and intuitive yet faithful models for deducing the spread of preferences across the nodes in a social network.
 Here we use the term `spread' to indicate distribution, not diffusion (we do not use the term `distribution' of preferences since we frequently use it in the context of probability distributions).
 We treat an edge as representing similarity 
 (tie strength with respect to homophily) 
 between two connected nodes, and not as influence weights or probabilities as in opinion dynamics models. 
 The preferences of the nodes could have resulted owing to
  opinion dynamics over the network, along with external influences such as mass media channels.
 Our goal here is to model the spread of these preferences while knowing the edge weights indicating similarities.

 \subsection{Dataset for Modeling Spread of Preferences in a Social Network}

 \begin{table}[b]
 \caption{Statistics of the Facebook dataset}
 \begin{center}
 \begin{tabular}{|c|c|}
 \hline \T \B 
 number of nodes & 844
 \\ \hline \T \B 
 number of edges & 6129
 \\ \hline \T \B 
 average clustering coefficient & 0.5890
 \\ \hline \T \B 
 number of triangles & 33216
 \\ \hline \T \B 
 fraction of closed triangles & 0.4542
 \\ \hline \T \B 
 diameter & 13 
 \\ \hline \T \B 
  90-percentile effective diameter & 4.9
  \\ \hline \T \B 
  power law exponent & 2.12
 \\ \hline 
 \end{tabular}
 \end{center}
 \label{tab:fbstats}
 \end{table}

 \begin{table}[t]
 \caption{Notation used in the paper
 }
 \begin{center}
 \begin{tabular}{|l|l|}
 \hline 
 \T \B 
 $N$ & set of nodes in the network \\ \hline
 \T \B 
 $E$ & set of edges in the network \\ \hline
 \T \B 
 $n$ & number of nodes in the network \\ \hline
 \T \B 
 $m$ & number of edges in the network \\ \hline
 \T \B 
 $r$ & number of alternatives \\ \hline
 \T \B 
 $P$ & preference profile of $N$ \\ \hline
 \T \B 
 $\tilde{d}(x,y)$ & distance between preferences $x$ and $y$ \\ \hline
 \T \B 
 $f$ & preference aggregation rule \\ \hline
 \T \B 
 $i,j$ & typical nodes in the network \\ \hline
 \T \B 
 $d(i,j)$ & expected distance between preferences of $i$ and $j$ \\ \hline
 \T \B 
 $c(i,j)$ & $1-d(i,j)$ \\ \hline
 \T \B 
 $k$ & desired cardinality of the representative set \\ \hline
 \T \B 
 $M$ & set of representatives who report their preferences \\ \hline
 \T \B
 $\Phi(M,i)$ & representative of node $i$ in set $M$ \\ \hline
 \T \B 
 $Q$ & profile containing unweighted preferences of $M$ \\ \hline
 \T \B 
 $Q'$ & profile containing weighted preferences of $M$ \\ \hline
 \T \B 
 $\Delta$ & error operator between aggregate preferences \\ \hline
 \end{tabular}
 \end{center}
 \label{tab:notation}
 \end{table}

 We presented a connection between the opinion dynamics setting and our setting in
 Section \ref{sec:relevant_diffusion}.
 It is assumed while studying opinion dynamics that the edge influence weights or probabilities are given, using which, a model (such as Independent Cascade and Linear Threshold) predicts the probability with which each node would get influenced starting with a given seed set. Alternatively, models such as DeGroot would predict the final opinion of each node, based on the edge influence weights which are assumed to be given. On similar lines, for our proposed models in Section \ref{sec:model_spread}, we assume the edge similarity parameters to be given, using which we deduce the preference of each node. 
 It can also be viewed that,
 given the edge parameters indicating homophilic tie strengths, we aim to find preferences of all nodes such that they are consistent with the given parameters for all edges, 
  that is, how preferences are distributed or spread across the nodes in the network.
 These edge parameters, however, would need to be inferred from some available data.
 This, again, is on similar lines as opinion dynamics models, where the influence weights or probabilities would be required to be inferred from some past data.

 In order to develop  models for achieving the aforementioned objectives, there was a need of a dataset that consists of (a) preferences of nodes for a range of topics and (b) the underlying social network.
 With this underlying goal, we developed a Facebook app titled {\em The Perfect Representer}, which asked the app users to report their preferences for 8 topics, over 5 alternatives for each topic. 
 The topics (broadly classified into personal and social types), their alternatives, the network of app users, and details about dataset preprocessing, are
 provided in Appendix \ref{app:facebook_app}.
 Table \ref{tab:fbstats} provides some statistics of this dataset.
 Table~\ref{tab:notation} presents the notation used throughout the paper.

 \subsection{Modeling Distance between Preferences of Nodes}
 \label{sec:model_pairs}

 In order to study the 
 homophilic tie strength or similarity between two nodes with respect to their preferences,
 we use the measure of normalized Kendall-Tau distance.
 The histogram of distance between preferences of most pairs of nodes (connected as well as unconnected), over the considered topics, followed a bell curve (most distances were clubbed together, with few of them spread apart). As a preliminary fit to the data, we consider Gaussian distribution since it is a natural and most commonly observed distribution in real-world applications. Since the range of values taken by the distance is bounded between 0 and 1, we consider Truncated Gaussian distribution. Furthermore, as the range of values is discrete, we consider a discrete version of truncated Gaussian distribution; denote it by $\mathcal{D}$.
 The discretization can be done in the following way.
 We know that when the number of alternatives is $r$, 
 the distance between consecutive discrete values is $1/$\begin{scriptsize}$\dbinom{r}{2}$\end{scriptsize} $=\frac{2}{r(r-1)}$.
 Let $F$ be the cumulative distribution function of the continuous truncated Gaussian distribution; the value of the probability mass function of $\mathcal{D}$ at $x$ can be shown to be
 
 \begin{small}
 \begin{displaymath}
 F\left(\min\left\{x+\frac{1}{r(r-1)},1\right\}\right) - F\left(\max\left\{x-\frac{1}{r(r-1)},0\right\}\right).
 \end{displaymath}
 \end{small}

 So for a pair of nodes $\{i,j\}$, there is an associated histogram of normalized Kendall-Tau distances over different topics.
 To this histogram,  we attempt to fit distribution $\mathcal{D}$ using MLE, and infer the corresponding parameters ($\mu_{ij}$ and $\sigma_{ij}$ of the original Gaussian distribution from which $\mathcal{D}$ is derived).
 We use KL divergence
 to find the fitting error between the true histogram obtained from the data and the fitted distribution $\mathcal{D}$.
 This process is run for all pairs of nodes; the resulting histogram of KL divergences over all pairs is presented in Figure \ref{fig:KLdiv_hist}.
 The root mean square (RMS) KL divergence for connected as well as unconnected pairs
 was observed to be 0.25 with more than 90\% pairs having KL divergence less than 0.6.

 Let 
 the expected distance between nodes $i$ and $j$ with respect to their preferences, be denoted by $d(i,j)$.
 Let {\em distance matrix} be a matrix whose cell $(i,j)$ is $d(i,j)$ and {\em similarity matrix} be a matrix whose cell $(i,j)$ is $c(i,j) = 1-d(i,j)$.
 Following are
 certain statistics about $d(i,j)$'s over all pairs of nodes in the obtained Facebook app data.
 The means ($=\avg_{\{i,j\}} d(i,j)$) for personal, social, and both types of topics are respectively 0.40, 0.30, 0.35, while the standard deviations are respectively 0.12, 0.08, 0.09.
 (Our focus throughout this paper will be on aggregating preferences across all topics or issues;
 we provide a preliminary analysis by considering different types of topics separately, in Section~\ref{sec:pvss}).

 \begin{figure}[t]
 \centering
 \includegraphics[scale=.56]{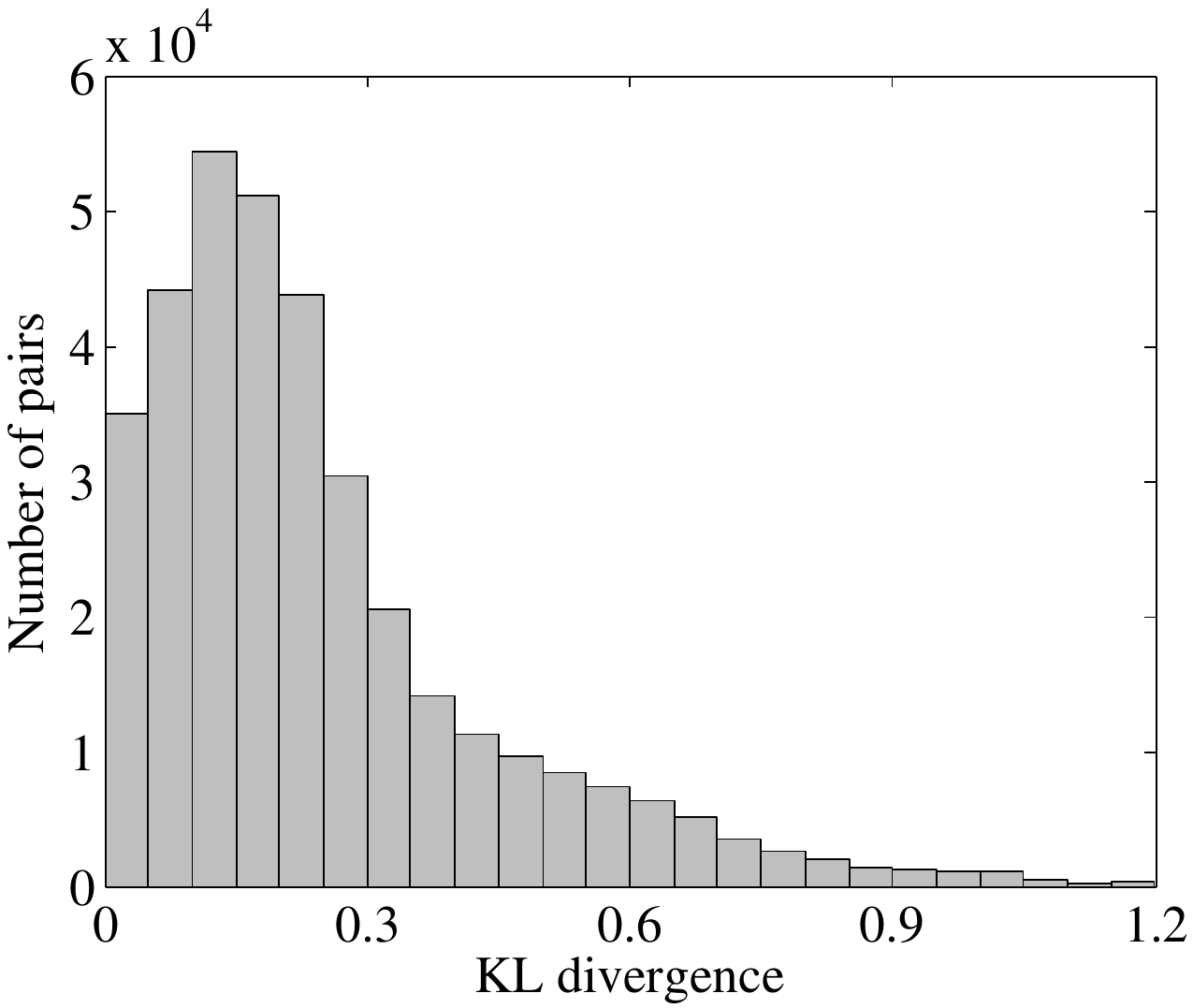}
 \caption{Histogram of KL divergence between the normalized histograms and the distribution $\mathcal{D}$ obtained using MLE for all pairs of nodes}
 \label{fig:KLdiv_hist}
 \end{figure}
 
 \subsection{Modeling Spread of Preferences in Social Network}
 \label{sec:model_spread}
 
 Recollect that the primary objective of modeling the spread of preferences in a social network is to identify best set of representative nodes for the entire social network. In order to do this, we need not only the distribution of distances between connected nodes, but also that between unconnected nodes. 
 As an analogy to modeling opinion dynamics as per models such as Independent Cascade, in order to identify best set of seed nodes for maximizing opinion diffusion, we need the influence of a candidate set, not only on its neighbors but also on distant nodes.

 Given the preferences of a set of nodes (call it {\em initializing set}), our models aim to deduce the possible preferences of all the nodes in the social network. If this model is run for several iterations, say $\mathbb{T}$, with a randomized initializing set in each iteration, we would have deduced the preferences of nodes for these $\mathbb{T}$ generated (or simulated) topics (and hence $\mathbb{T}$ preference profiles). This would then enable us to deduce the distribution of distances between unconnected nodes as well.
 We address each of our models as a {\em Random Preferences Model (RPM)}
 since the deduced preferences and hence the distances are randomized.

 The models that we propose work iteratively (similar to the Independent Cascade model wherein iterations correspond to time steps). In each iteration, we partition the nodes into two sets, namely, (1) {\em assigned nodes} which are assigned a preference, and (2) {\em unassigned nodes} which are not assigned a preference as yet.
 Let {\em potentially next nodes} in a given iteration be the subset of unassigned nodes, which have at least one neighbor in the set of assigned nodes.
 Starting with the nodes in the initializing set as the only assigned nodes,
 a node is chosen uniformly at random from the set of potentially next nodes in each iteration, and is assigned a preference based on the preferences of its {\em assigned neighbors} (neighbors belonging to the set of assigned nodes).
 Algorithm~\ref{alg:generic_model} presents a generic model
  on these lines.
 
 Note that the preference profiles are obtained by considering independently sampled initializing sets across iterations (versus considering the same set in models such as Independent Cascade).
 Also, social networks are known to have a core-periphery structure. Considering a random core node and a random periphery node, the latter is more likely to deduce its preference due to the preference of the former, than the other way around; this is captured over iterations in Algorithm~\ref{alg:generic_model}.
 We now present a number of models as its special cases.
 
 \begin{algorithm}[t]
 \begin{small}
 \KwIn{Connected graph $G$ with parameters $\mu,\sigma$ on its  \mbox{edges,
 Number of generated (simulated) topics $\mathbb{T}$}}
 \KwOut{Preference profiles for $\mathbb{T}$ generated topics}
 \For{$t \gets 1$ \textbf{to} $\mathbb{T}$}{
 \mbox{Randomly choose an initializing set of certain size $s$\;}
 Assign preferences to nodes in this initializing set\;
 \For{$i \gets 1$ \textbf{to} $n-s$}{
 Choose an unassigned node $u$ uniformly at random from the set of potentially next nodes\;
 Assign a preference to $u$ based on:\\
  (i) the model under consideration and \\
  (ii) either (a) preferences of  assigned neighbors \\~~~or (b) preference of one of its assigned  \\~~~neighbors chosen based on a certain criterion\;
 }
 }
 \end{small}
 \caption{{A generic model for spread of preferences in a social network}}
 \label{alg:generic_model}
 \end{algorithm}

 \vspace{3mm}
 \subsubsection{{Independent Conditioning (RPM-IC)}}
 Let $P_j$ be the random preference to be assigned to a node $j$ and $A_j$ be the set of {\em assigned neighbors} of node $j$. So given the preferences of its assigned neighbors, the probability of node $j$ being assigned a preference $p_j$ is 
 
 \begin{small}
 \begin{align*}
 & \mathbb{P}\left(P_j = p_j | (P_{i} = p_{i})_{i \in A_j}\right)
 \\&= \frac{\mathbb{P}\left((P_{i} = p_{i})_{i \in A_j} | P_j = p_j\right) \; \mathbb{P}(P_j = p_j)}{\sum_{p_j} \mathbb{P}\left((P_{i} = p_{i})_{i \in A_j} | P_j = p_j\right) \;  \mathbb{P}(P_j = p_j)}
 \;\;\;\text{(Bayes' rule)}
 \\&\propto \mathbb{P}\left((P_{i} = p_{i})_{i \in A_j} | P_j = p_j\right)
 \end{align*}
 \end{small}

 The proportionality results since the denominator is common, and $\mathbb{P}(P_j = p_j) = \frac{1}{r!}$ for all $p_j$'s (assuming no prior bias).
 Now we make a simplifying assumption of mutual independence among the preferences of assigned neighbors of node $j$, given its own preference. So the above proportionality results in
 %
 
 \begin{small}
 \begin{align}
 \label{eqn:rpmic}
 & \mathbb{P}\left(P_j = p_j | (P_{i} = p_{i})_{i \in A_j}\right)
 \propto \prod_{i \in A_j} \mathbb{P}(P_{i} = p_{i} | P_j = p_j)
 \end{align}
 \end{small}
  \noindent
  The right hand term consists of factors $\mathbb{P}(P_{i} = p_{i} | P_j = p_j)$ for each $i \in A_j$.
 $\mathbb{P}(P_{i} = p_{i} | P_j = p_j)$ says, given we have assigned preference $p_j$ to $j$, what is the probability that $i$ had preference $p_i$?
  We now show how to compute it.
  Let $D_{ij}$ be the random variable corresponding to the distance between  $i$ and $j$ (so $D_{ij}$ has distribution $\mathcal{D}$ with values of $\mu,\sigma$ corresponding to pair $\{i,j\}$). Let $\tilde{d}(p_i,p_j)$ be the distance between preferences $p_i$ and $p_j$. So,
 
 %
  \begin{footnotesize}
 \begin{align}
 \hspace{-1mm}
 \nonumber
 &~\mathbb{P}\left(P_{i} = p_{i} | P_j = p_j\right) 
 \\
 \hspace{-1mm}
 \nonumber
 &= \mathbb{P}\left(P_{i} = p_{i} , D_{ij} = \tilde{d}(p_i,p_j) | P_j = p_j\right) 
 \nonumber
 \;\;
 (\because \text{given } P_j, P_i \cap D_{ij} = P_i)
 \\
 \hspace{-1mm}
 \nonumber
 &= \mathbb{P}\left(D_{ij} = \tilde{d}(p_i,p_j) | P_j = p_j\right) \;\mathbb{P}\,\Big(P_{i} = p_{i} | D_{ij} = \tilde{d}(p_i,p_j) , 
 \nonumber 
 P_j = p_j\Big)
 \\
 \hspace{-1mm}
 \label{eqn:conditional}
 &= \mathbb{P}\left(D_{ij} = \tilde{d}(p_i,p_j)\right) \;\mathbb{P}\left(P_{i} = p_{i} | D_{ij} = \tilde{d}(p_i,p_j) , P_j = p_j\right)
 \\
 \hspace{-1mm}
 \nonumber
 &\;\;\;\;\;\;\;\;\;\;\;\;\;\;\;\;\;\;\;\;\;\;\;\;\;\;\;\;\;\;\;\;\;\;\;\;\;\;\;\;\;\;\;\;\;\;\;\;\;\;\;\;\;\;\;\;\;\;\;\;\;\;(\because D_{ij} \text{ is independent of } P_j)
 \end{align}
 \end{footnotesize}
 
 Here, $\mathbb{P}\left(D_{ij} = \tilde{d}(p_i,p_j)\right)$ can be readily obtained by looking at the distribution $\mathcal{D}$ corresponding to $\{i,j\}$.
 Also, as we assume that no preference has higher priority than any other, $\mathbb{P}\left(P_{i} = p_{i} | D_{ij} = \tilde{d}(p_i,p_j) , P_j = p_j\right)$ is precisely the reciprocal of the number of preferences which are at distance $\tilde{d}(p_i,p_j)$ from a given preference. 
 This value can be expressed in terms of distance $\tilde{d}(p_i,p_j)$ and the number of alternatives.
 As an example for the case of 5 alternatives, the number of preferences which are at a normalized Kendall-Tau distance of $0.1$ (or Kendall-Tau distance of 1) from any given preference is 4; for example, if the given preference is $(A,B,C,D,E)$, the 4 preferences are $(B,A,C,D,E)$, $(A,C,B,D,E)$, $(A,B,D,C,E)$, $(A,B,C,E,D)$. It is clear that this count is independent of the given preference.

 The initializing set for this model is a singleton set chosen uniformly at random, and is assigned a preference chosen uniformly at random from the set of all preferences.
 For each unassigned node, this model computes probabilities for each of the $r!$ possible preferences, by looking at the preferences of its assigned neighbors, and hence chooses exactly one preference based on the computed probabilities (multinomial sampling). So the time complexity of this model for assigning preferences for $\mathbb{T}$ topics is $O(r!(\sum_{i\in N} \text{deg}(i))\mathbb{T}) = O(r!m\mathbb{T})$, where $\text{deg}(i)$ is the degree of node $i$.
 Note that the initializing set is taken to be singleton, since
 an initializing set consisting of multiple nodes may lead to conflict in preferences, and hence inconsistencies with the distributions.
 
 \vspace{3mm}
 \subsubsection{{Sampling (RPM-S)}}

 For assigning preference to an unassigned node $j$, we first choose one of its assigned neighbors $i$; say its assigned preference is $p_i$. Then we sample a value from the discretized truncated Gaussian distribution $\mathcal{D}$ having the parameters $(\mu_{ij},\sigma_{ij})$; say the sampled value is $\hat{d}_{ij}$. Following this, we choose a preference $p_j$ such that the Kendall Tau distance between $p_i$ and $p_j$, that is $\tilde{d}(p_i,p_j)$, is $\hat{d}_{ij}$ (in case of multiple possibilities, choose $p_j$ uniformly at random from among these possibilities).
 The assigned neighbor $i \in A_j$ could be selected in multiple ways; we enlist three natural ways which we consider in our study:

       \vspace{3mm}
 \begin{enumerate}[\setlength{\labelwidth}{\widthof{\textbullet}}
 \setlength{\labelsep}{7.5pt}
  \setlength{\IEEElabelindent}{0pt}
     \IEEEiedlabeljustifyl ]
 \item[a)] \textit{Random:} A node is selected uniformly at random from $A_j$. 
 This is a natural way and is immune to overfitting.
 \item[b)] \textit{$\mu$-based:} A node $i$ is selected from $A_j$ with probability proportional to $1-\mu_{ij}$ (multinomial sampling). 
 This is consistent with the empirical belief that a node's preference depends more on its more similar neighbors.
 \item[c)] \textit{$\sigma$-based:} A node $i$ is selected from $A_j$ with probability proportional to $1/\sigma_{ij}$ (multinomial sampling). 
 This is statistically a good choice because, giving lower priority to distributions with low standard deviations may result in extremely large errors.
 \end{enumerate}
 
 Like RPM-IC, 
 the initializing set for this model also is a singleton set chosen uniformly at random, and is assigned a preference chosen uniformly at random from the set of all preferences.
 For each unassigned node, this model selects an assigned neighbor in one of the above ways, samples a distance value from the corresponding distribution, and chooses a preference uniformly at random from the set of preferences which are at that distance from the preference of the selected assigned neighbor. So the time complexity of this model for assigning preferences for $\mathbb{T}$ topics is $O(r!(\sum_{i\in N} \text{deg}(i))\mathbb{T}) = O(r!m\mathbb{T})$.

      \vspace{3mm}
 \subsubsection{{Duplicating (RPM-D)}}
 In this model, node $j$ is assigned a preference by duplicating the preference of its most similar assigned neighbor. 
 This model pushes the similarity between a node and its most similar assigned neighbor to the extreme extent that, the preference to be assigned to the former is not just similar to the latter, but is exactly its copy due to imitation.

 Here, if the initializing set is a singleton, all nodes would have the same preference (since the graph is assumed to be connected).
 A small-sized initializing set also would lead to very few distinct preferences.
 So as a heuristic,
 the initializing set for this model is a connected set of certain size $s$ which is obtained using the following iterative approach: start with a node chosen uniformly at random and then continue adding a new node to the set uniformly at random, from among the nodes that are connected to at least one node in the set. In our experiments, we choose $s$ itself to be uniformly at random from $\{1,\ldots,\lceil\sqrt{n}\rceil\}$
 (similar to the information diffusion setting where $\lceil\sqrt{n}\rceil$ is usually a good heuristic upper bound for the size of the seed set). 
 The nodes in this initializing set are assigned preferences based on RPM-IC.
 The time complexity of this model for assigning preferences for $\mathbb{T}$ topics is $O((\sum_{i\in N} \text{deg}(i))\mathbb{T}) = O(m\mathbb{T})$.

      \vspace{3mm}
 \subsubsection{{Random (RPM-R)}}
 In this model, preferences are assigned randomly to all the nodes without considering the distribution of distances from their neighbors, that is, without taking the social network effect into account. This model can be refined based on some known bias in preferences, for instance, 
 there may be a prior distribution on preferences owing to common external effects such as mass media.
 Its time complexity for assigning the preferences for $\mathbb{T}$ topics is $O(n\mathbb{T})$.

      \vspace{3mm}
 \subsubsection{{Mean Similarity Model - Shortest Path Based \\(MSM-SP)}}
 \label{sec:msm_sp}
 
 Unlike the models discussed so far, this model does not deduce the spread of preferences in a social network. Instead, it deduces the mean similarity between any pair of nodes, given the mean similarities of connected nodes.
 
 Recall that cell $(i,j)$ of a {\em distance matrix} contains $d(i,j)$, the expected distance between preferences of nodes $i$ and $j$. 
 We initialize all values in this matrix to $0$ for $i=j$ and to $1$ (the upper bound on distance) for any unconnected pair $\{i,j\}$.
 In the case of a connected pair $\{i,j\}$, the value $d(i,j)$ is initialized to the actual observed expected distance (this value is known from the edge parameters indicating homophilic tie strength).
 Following the initialization of the distance matrix, the next step is to update it.

 Consider nodes $(i,v,j)$ where we know the expected distances $d(v,i)$ and $d(v,j)$, and we wish to find $d(i,j)$ via node $v$.
 Given the preference of node $v$ and $d_x=d(v,i)$, let the preference of node $i$ be chosen uniformly at random from the set of preferences that are at a distance $\eta$ from the preference of node $v$, where $\eta$ is drawn from distribution $\mathcal{D}$  with mean $d_x$ (and some standard deviation).
 Similarly, given $d_y=d(v,j)$, let the preference of node $j$ be obtained.
 Using this procedure, the distance between the obtained preferences of nodes $i$ and $j$ via $v$ over several iterations and varying values of standard deviations, was observed to follow a bell curve; so we again approximate this distribution by $\mathcal{D}$.
 Let the corresponding expected distance constitute the cell $(d_x,d_y)$ of a table, say $T_r$, where $r$ is the number of alternatives (for the purpose of forming a table, we consider only finite number of values of $d_x,d_y$).
 It is clear that this distance is independent of the actual preference of node $v$.

 We empirically observe that $T_r$ is different from $T_{r'}$ for $r \neq r'$.
 Following are the general properties of $T_r$:
 \begin{itemize}
 \item $T_r(d_y,d_x) = T_r(d_x,d_y)$
 \item $T_r(1-d_x,d_y) = T_r(d_x,1-d_y) = 1-T_r(d_x,d_y)$
 \item $T_r(1-d_x,1-d_y) = T_r(d_x,d_y)$
 \end{itemize}

 \begin{table}[t]
 \caption{A partial view of table $T_5$ 
 }
 \begin{center}
 \begin{small}
 \begin{tabular}{|>{\small}c|>{\small}c|>{\small}c|>{\small}c|>{\small}c|>{\small}c||>{\small}c|}
 \hline 
 0.00 &  0.10 &  0.20 &  0.30 &  0.40 &  0.50 & \slashbox{$d_x$}{$d_y$}	\\ \hline  \hline
      0.00   &  0.10  &  0.20  &  0.30  &  0.40  &  0.50 &  0.00 \\ \hline
      \multicolumn{1}{c|}{}  &  0.17  &  0.26  &  0.33  &  0.42  &  0.50 &  0.10  \\ \cline{2-7}
      \multicolumn{2}{c|}{}  &  0.32  &  0.37  &  0.43  &  0.50 &  0.20\\ \cline{3-7}
      \multicolumn{3}{c|}{}  &  0.40  &  0.45  &  0.50   &  0.30\\ \cline{4-7}
      \multicolumn{4}{c|}{}  &  0.47  &  0.50   &  0.40\\ \cline{5-7}
      \multicolumn{5}{c|}{}  &  0.50  &  0.50\\ \cline{6-7}
 \end{tabular}
 \end{small}
 \end{center}
 \label{tab:distance_table}
 \end{table}

 As the topics of our app had 5 alternatives, we obtain the table $T_5$ 
 for any pair $\{d_x,d_y\}$.
 In order to consider finite number of values of $d_x,d_y$ for forming the table, we only account for values that are multiples of 0.01 (and also round every entry in $T_r$ to the nearest multiple of 0.01).
 Table~\ref{tab:distance_table} presents a partial view of $T_5$ which can be completed using the general properties of $T_r$ enlisted above; we present $d_x,d_y$ in multiples of 0.10 for brevity.
 Now the next question is to find $d(i,j)$ for any pair $\{i,j\}$.
 In order to provide a fit to the distances obtained from the dataset, 
 we initialize the distance matrix as explained in the beginning of this subsection (while rounding every value to the nearest multiple of 0.01) and update it
 based on the {\em all pairs shortest path algorithm}~\cite{cormen2009introduction} with the following update rule:
 \begin{center}
 {\small 
 \textbf{if} $d(v,i)~\text{\textcircled{+}}_r~d(v,j) < d(i,j)$ \textbf{then} $d(i,j) = d(v,i)~\text{\textcircled{+}}_r~d(v,j)$,}
 \end{center}
 where we define operator $\text{\textcircled{+}}_r$ as follows:
 \begin{small}
 \begin{displaymath}
   d_x~\text{\textcircled{+}}_r~d_y=\begin{cases}
     T_r(d_x,d_y), & \text{if $d_x \leq 0.5$ and $d_y \leq 0.5$}\\
     \max\{d_x,d_y\}, & \text{if $d_x > 0.5$ or $d_y > 0.5$}
   \end{cases}
 \end{displaymath}
 \end{small}
 The corresponding {similarity matrix} is obtained by assigning $1-d(i,j)$ to its cell $(i,j)$.
 The two cases while defining $\text{\textcircled{+}}_r$ 
 ensure that $d(i,j)$ via $v$ is assigned a value which is at least $\max\{d(v,i),d(v,j)\}$.
 This is to guarantee the convergence of the adapted all pairs shortest path algorithm.

 The time complexity of deducing the mean distances between all pairs of nodes using MSM-SP is dominated by the all pairs shortest path algorithm, $O(n^2 \log n + nm)$ using Dijkstra's algorithm over all source nodes, where $m$ is generally small owing to sparsity of social networks.
 \begin{remark}
 We have already seen the time complexities of the models (other than MSM-SP) for assigning preferences for $\mathbb{T}$ topics; the time complexity of deducing the mean distances between all pairs of nodes after that, is $O(\mathcal{G}_rn^2\mathbb{T})$, where $\mathcal{G}_r$ is the time complexity of computing the distance between two preferences (for instance, $\mathcal{G}_r$ is $O(r^2)$ for Kendall-Tau distance, $O(r)$ for Spearman Footrule distance).
 \end{remark}

 \vspace{-4mm}
 \subsection{Validating the Models}
 
 To validate a given model, we generated the preferences of all the nodes for $\mathbb{T}=10^4$ simulated topics. Following this, we could get the distances between preferences of every pair of nodes in terms of normalized Kendall-Tau distance. 
 In order to measure the error $\text{err}(\{i,j\})$ of this deduced model distribution against the distribution $\mathcal{D}$ with the actual values of $\mu_{ij}$ and $\sigma_{ij}$ for a particular pair of nodes $\{i,j\}$, we used the following methods:
 \begin{enumerate}[\setlength{\labelwidth}{\widthof{\textbullet}}
 \setlength{\labelsep}{7.5pt}
  \setlength{\IEEElabelindent}{0pt}
     \IEEEiedlabeljustifyl ]
 \item Kullback-Leibler (KL) divergence, a standard way of measuring error between  actual and model distributions,
 \item Earth Mover's distance (EMD), another measure of the distance between two probability distributions,
 \item Absolute difference between the means of these two distributions, since some of our algorithms would be working with the distribution means,
 \item Chi-Square Goodness of Fit Test, a non-parametric test to compare the obtained distribution with the expected probability distribution (we use significance level 0.05).

 \end{enumerate}

 For methods 1-3,
 we measured the total error over all pairs of nodes as the root mean square (RMS) error, that is, $\sqrt{\avg_{\{i,j\}} [\text{err}(\{i,j\})]^2}$.
 Figure~\ref{fig:plot_models} provides a comparison among the models under study, with respect to their errors and running times. 
 (Note that RMS KL divergence and EMD are not applicable for MSM-SP).
 RPM-IC gave the least errors but at the cost of extremely high running time.
 RPM-D and RPM-R ran fast but their errors were in a higher range.
 RPM-S showed a good balance between the errors and running time; the way of choosing the assigned neighbor ($\mu$-based, $\sigma$-based, or random) did not show significant effect on its results. 
  MSM-SP was observed to be the best model when our objective was to deduce the mean distances between all pairs of nodes, and not the preferences themselves.
 With 5 alternatives, the normalized Kendall-Tau distances can take 11 possible values $\{0,0.1,\ldots,0.9,1\}$. 
 The upper critical value of chi-square distribution with 10 degrees of freedom for significance level 0.05 is 18.307. 
 The test is failed if the chi-square test statistic exceeds this  value.

 \begin{figure}[t!]
 \centering
 \includegraphics[scale=.52]{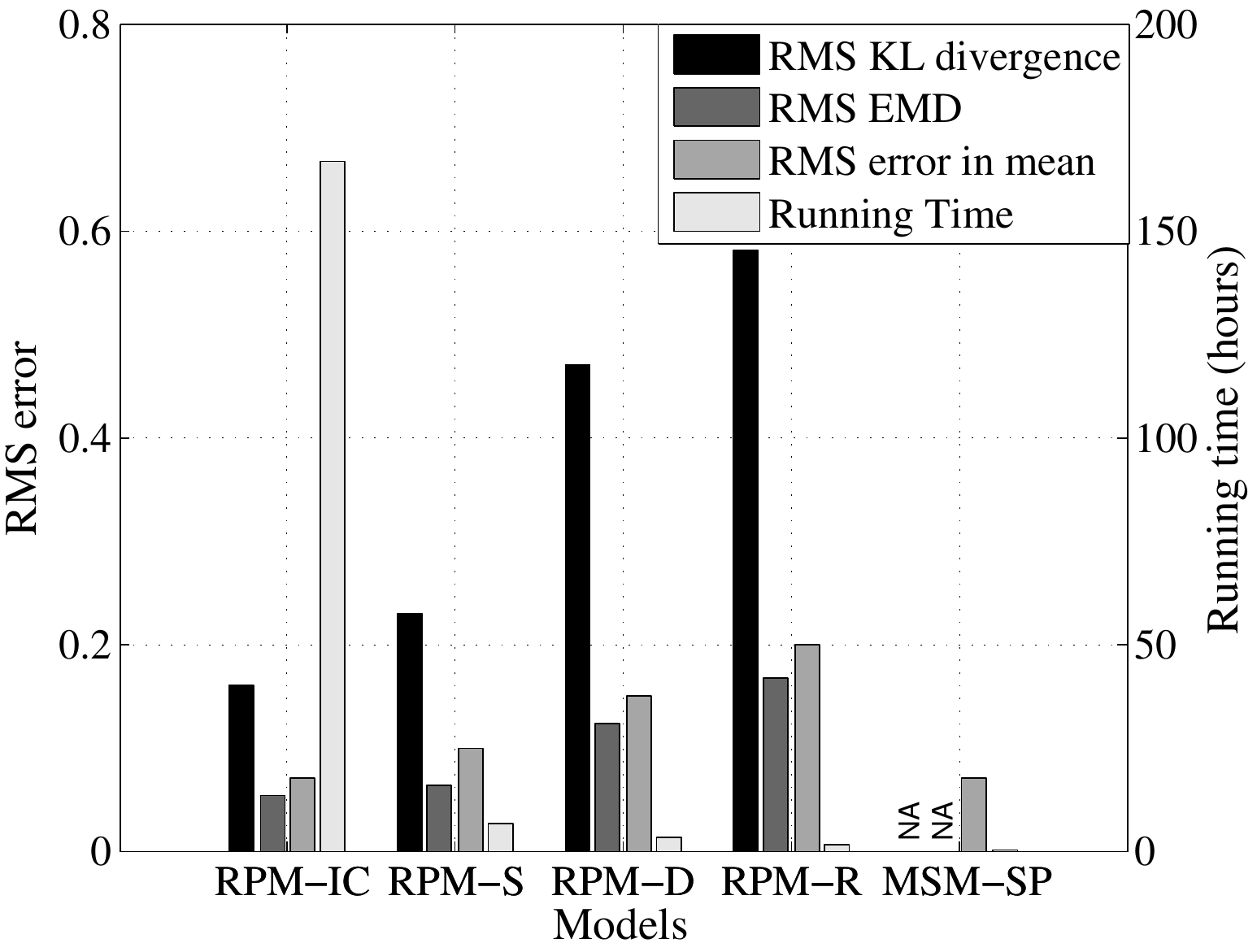}
 \caption{Comparison among the considered models when run for $10^4$ iterations (or simulated topics)
 }
 \label{fig:plot_models}
 \end{figure}
 
 \begin{table}[t]
 \caption{Fraction of pairs failing Chi-Square Goodness of Fit Test}
 \centering
 \begin{tabular}{|c|c|c|c|c|c|}
 \hline
 \T \B
 RPM-IC &  RPM-S & RPM-D & RPM-R & MSM-SP
 \\ \hline \T \B
 0.001 & 0.02 & 0.96 & 0.99 & NA
 \\ \hline
 \end{tabular}
 \end{table}

 The chi-square statistic is high (punishes heavily) if the deduced distribution allots non-negligible probability to values which are allotted negligible probability by the actual distribution. This was the case for a large fraction of pairs when using RPM-D and for over 99\% of pairs when using RPM-R, thus leading to high fractions of failures. 

 \begin{remark}
 Owing to a good balance between accuracy and efficiency, RPM-S can be justified to be used for deducing the preferences of nodes in a social network, with the knowledge of preferences of a subset of nodes. 
 However, if we need only the deduced edge similarities for all pairs of nodes without having to deduce the actual preferences, MSM-SP is a promising model.
 \end{remark}

 As explained in Section \ref{sec:overview}, a model that could deduce the preferences of all the nodes, would allow us to deduce the actual aggregate preference of the network. This would hence allow us to estimate how closely the aggregate preference obtained using a  set of representative nodes resembles the actual aggregate preference.
 With an estimate of closeness associated with every candidate representative set, it would now be possible to determine a representative set with the closest resemblance.
 The focus of the following sections will be on formulating an objective function that quantifies this closeness, and hence developing algorithms for determining the best representative set.


 \section{The Sampling-based Preference \\Aggregation Problem}
 \label{sec:problem_pasn}

 Given a network with a set of nodes $N$ and an aggregation rule $f$, our objective is to choose a set of representative nodes $M \subseteq N$ of certain cardinality $k$, and aggregate their preferences to arrive at an aggregate preference that is `close enough' to the aggregate preference of $N$ using $f$, in expectation (in expectation, owing to the stochastic nature of the edge similarities). 
 We now formalize this problem.

 Let the expected distance between a set $S \subseteq N$ and a node $i \in N$ be
 \begin{equation}
 \label{eqn:dist}
 d(S,i) = \min_{j \in S} d(j,i)
 \end{equation}
 We call $d(S,i)$ as the `expected' distance since $d(j,i)$ is the expected distance between nodes $j$ and $i$ with respect to their preferences.
 Since $d(i,i)=0,\forall i \in N$, we have $d(S,j)=0,\forall j \in S$.
 Let
 \begin{equation}
 \label{eqn:repr}
 \Phi(S,i) \sim_\mathcal{U}\argmin_{j \in S} d(j,i)
 \end{equation}
 be a node chosen uniformly at random from the set of nodes in $S$ that are closest in expectation to node $i$ in terms of preferences. We say that $\Phi(S,i)$ represents node $i$ in set $S$. In other words, $\Phi(S,i)$ is the unique {\em representative} of $i$ in $S$.

  \vspace{-1.5mm}
 \subsection{Aggregating Preferences of Representative Nodes}
 \label{sec:aggr_repr}

 Recall that preference profile is a multiset containing preferences of the nodes.
 Let the preference profile of the population $N$ be $P$ and that of the selected representative set $M$ be $Q$. 
 Suppose $M=\{i,j\}$ where
 $j$ represents, say ten nodes
  including itself, 
 while $i$ represents one node
 (only itself). 
 If the preferences are aggregated by 
 feeding $Q$ to aggregation rule $f$,
 the aggregate preference $f(Q)$ so obtained may not reflect the preferences of the population, in general, owing to the asymmetry in importance of the selected nodes.
 So
  to capture this asymmetry, their preferences must be weighted. 
 In our approach, the weight given to the preference of a node is precisely the number of nodes that it represents.

 Let $Q'$ be the preference profile obtained by replacing every node's preference in $P$ by its uniquely chosen representative's preference. So, $k=|M|=|Q| \leq |Q'|=|P|=|N|=n$.
 In our approach, the weight of a representative implies the number of nodes it represents or equivalently, the number of times its preference appears in the new preference profile.
 So in the above example, 
 the new  profile $Q'$ consists of ten preferences of $j$ and one of $i$.
 Thus we aggregate the preferences of selected nodes using $f(Q')$.

 So the problem under consideration can be viewed as a setting where given certain nodes representing a population, every node in the population is asked to choose one among them as its representative; now the representatives vote on behalf of the nodes who chose them.
 
  \vspace{-1.5mm}
 \subsection{A Measure of `Close Enough'}
 \label{sec:Delta}

 Now given $k$, our objective is to select a set of nodes $M$ such that $|M|=k$, who report their preferences such that, 
 in expectation, the error incurred in using the aggregate preference, say $f(R)$, obtained by aggregating the preferences of the nodes in $M$ (in an unweighted manner if $R=Q$ or in a weighted manner if $R=Q'$) instead of $f(P)$ obtained by aggregating the preferences of the nodes in $N$, is minimized.
 Note that an aggregation rule $f$ may not output a unique aggregate preference, that is, $f$ is a correspondence. 
 So
 the aggregation rule $f$ on the preferences of the entire population outputs $f(P)$ which is a set of preferences. 
 
 If $f(R)$ is a set of multiple preferences, we need to have a way to determine how close it is to $f(P)$. For this purpose, we propose an extension of Kendall-Tau distance for sets of preferences. Now, since $f(P)$ is generally not known and all preferences in $f(R)$ are equivalent in our view, we choose a preference from $f(R)$ uniformly at random and see how far we are from the actual aggregate preference, in expectation. 
 In order to claim that a chosen preference in $f(R)$ is a good approximation, it suffices to show that it is close to at least one preference in $f(P)$.  Also, as any preference $y$ in $f(R)$ is chosen uniformly at random, we define the error incurred in using $f(R)$ instead of $f(P)$ as
 \begin{equation}
 \label{eqn:Delta}
 f(P) \, \Delta \, f(R) = \mathbb{E}_{y \sim_\mathcal{U} f(R)} \left[ \min_{x \in f(P)} \tilde{d}(x,y) \right]
 \end{equation}
 where $\tilde{d}(x,y)$ is the distance between preferences $x$ and $y$ in terms of the same distance measure as $d(\cdot,\cdot)$ (normalized Kendall-Tau distance in our case). Notice that in general, $f(P) \, \Delta \, f(R) \neq f(R) \, \Delta \, f(P)$.
 For instance, if $f(P)=\{p_A,p_B\},f(R)=\{p_A\}$, the error is zero since we have obtained an aggregate preference that is among the actual aggregate preferences. On the other hand, 
 if $f(P)=\{p_A\},f(R)=\{p_A,p_B\}$, there is a half probability of choosing $p_A$ which is consistent with the actual aggregate preference, however, there is a half probability of choosing $p_B$ which is inconsistent; this results in $f(P) \, \Delta \, f(R) = \frac{1}{2}\,\tilde{d}(p_B,p_A)$.
 Note that $\Delta$ can be defined in several other ways depending on the application or the case we are interested in (worst, best, average, etc.). 
 In this paper, 
 we use the definition of $\Delta$ as given in Equation~(\ref{eqn:Delta}).
 
 Recall that the distance between a pair of nodes is drawn from distribution $\mathcal{D}$ with the corresponding parameters, so the realized values for different topics would be different in general. The value $f(P) \, \Delta \, f(R)$ can be obtained for every topic and hence the expected error $\mathbb{E} [ f(P) \, \Delta \, f(R) ]$ can be computed by averaging the values over all topics.
 It can be easily seen that $\mathbb{E} [ f(P) \, \Delta \, f(R) ] \in [0,1]$.
 Now our objective is to find a set $M$ such that $\mathbb{E} [ f(P) \, \Delta \, f(R) ]$ is minimized.

  \vspace{-1mm}
 \subsection{An Abstraction of the Problem}
 \label{sec:abstraction}

 For aggregation rule $f$, we define the objective function to be $\mathcal{F}(M) = 1- \mathbb{E} [ f(P)\;\Delta \;f(R) ]$ with the objective of finding a set $M$ that maximizes this value. 
 However, even if $M$ is given, computing $\mathcal{F}(M)$ is computationally intensive for several aggregation rules and furthermore, hard for rules such as Kemeny. 
 It can be seen that $\mathcal{F}(\cdot)$ is not monotone for non-dictatorial aggregation rules (the reader is referred to Figure~\ref{fig:plots_pasn} for the non-monotonic plots of Greedy-sum and Degree-cen algorithms since
 in a run of these algorithms, a set of certain cardinality is a superset of any set having a smaller cardinality).
 It can also be  checked empirically
 that $\mathcal{F}(\cdot)$ is neither submodular nor supermodular.
 Even for simple non-dictatorial aggregation rules, it is not clear if one could efficiently find a set $M$ that maximizes $\mathcal{F}(\cdot)$, within any constant approximation factor.
 This motivates us to propose an approach that finds set $M$ \textit{agnostic to the aggregation rule} being used.
 To this end, we propose a property for preference aggregation rules, {\em weak insensitivity}.
 \begin{definition}[weak insensitivity property]
 \label{def:weakins}
 A preference aggregation rule satisfies {\em weak insensitivity property\/} under a distance measure and an error operator between aggregate preferences $\Delta$, if and only if for any $\epsilon_d$, a change of $\eta_i \leq \epsilon_d$ in the preferences of all $i$, results in a change of at most $\epsilon_d$ in the aggregate preference.
 That is, $\forall \epsilon_d,$
 \begin{equation}
 \nonumber
 \;\eta_i \leq \epsilon_d \;,\; \forall i \in N \implies f(P) \, \Delta \, f(P')  \leq \epsilon_d
 \end{equation}
 where $P'$ is the preference profile of voters after deviations.
 \end{definition}
 
 We call it `weak' insensitivity property because it allows `limited' change in the aggregate preference (strong insensitivity can be thought of as a property that allows no change). 
 This is potentially an important property that an aggregation rule should satisfy as it is a measure of its robustness in some sense.
 It is clear that under normalized Kendall-Tau distance measure and $\Delta$ as defined in Equation~(\ref{eqn:Delta}), an aggregation rule that outputs a random preference does not satisfy weak insensitivity property as it fails the criterion for any $\epsilon_d < 1$, whereas dictatorship rule that outputs the preference of a single voter satisfies the property trivially.
 For our purpose, we propose a weaker form of this property, which we call {\em expected weak insensitivity}.
 
 \begin{definition}[expected weak insensitivity property]
 \label{def:gaussweakins}
 A preference aggregation rule satisfies {\em expected weak insensitivity property\/} under a distribution, a distance measure, and an error operator between aggregate preferences $\Delta$, if and only if for any $\mu_d$, a change of $\eta_i$ in the preferences of all $i$, where $\eta_i$ is drawn from the distribution with mean $\delta_i \leq \mu_d$ and any permissible standard deviation $\sigma_d$, results in an expected change of at most $\mu_d$ in the aggregate preference.
 That is, $\forall \mu_d, \; \forall \text{ permissible } \sigma_d ,  $
 \begin{equation}
 \label{eqn:weak_ins}
 \;\delta_i \leq \mu_d \;,\; \forall i \in N \implies \mathbb{E} [ f(P) \, \Delta \, f(P') ] \leq \mu_d
 \end{equation}
 where $P'$ is the preference profile of voters after deviations.
 \end{definition}
 Note that in $\mathbb{E} [ f(P) \, \Delta \, f(P') ]$, the expectation is over the varying modified preferences of the voters (since $\eta_i$'s vary across instances and also, there are multiple preferences at a distance of $\eta_i$ from any given preference, in general). 
 In this paper, we study expected weak insensitivity property under  distribution $\mathcal{D}$, normalized Kendall-Tau distance, and $\Delta$ as defined in Equation~(\ref{eqn:Delta}). 
 For distribution $\mathcal{D}$ with $\mu_d \in [0,1]$,
 the permissible range of $\sigma_d$ depends on $\mu_d$. 
 This range is wider for intermediate values of $\mu_d$ and shortens as we move towards the extremes.
 In any case,
 the permissible range for $\sigma_d$ cannot exceed $\frac{1}{\sqrt{12}} \approx 0.28$ (value at which the truncated Gaussian becomes a Uniform distribution), while for $\mu_d \in \{0,1\}$, the permissible $\sigma_d=0$ (since a Gaussian distribution truncated in $[0,1]$ with any non-zero standard deviation cannot have 0 or 1 as the mean).

 \begin{remark}
 We conducted extensive simulations for investigating empirical satisfaction of the expected weak insensitivity property by the considered aggregation rules under  distribution $\mathcal{D}$, normalized Kendall-Tau distance, and $\Delta$ as defined in Equation~(\ref{eqn:Delta}). The simulations considered values of $r=3,\ldots,7$, $n = 100,200,\ldots,1000$, different distributions on the voter preferences, discrete values of $\mu_d$ separated by $1/$\begin{scriptsize}$\dbinom{r}{2}$\end{scriptsize} (the finest resolution for the given $r$), and discrete values of $\sigma_d$ separated by 10\% of the maximum permissible value for the corresponding $\mu_d$. 
 For any fixed preference profile,
 Table \ref{tab:weak_ins_rules} presents 
 typical fractions of simulation runs in which Criterion (\ref{eqn:weak_ins}) was satisfied, by various aggregation rules.
 Dictatorship rule always satisfied the criterion, whereas Smith set and Schulze rules satisfied it almost always.
 It was, however, violated by Veto rule for a large fraction of simulation runs.
 For all other considered rules, if (\ref{eqn:weak_ins}) was violated, it was usually for the lowest values of $\mu_d$.
 Furthermore, in most cases, the extent of violation was not very significant.
 So if $\mu_d$ is not very small, these rules could be assumed to satisfy the expected weak insensitivity property for practical purposes.
 \end{remark}
 
 \begin{table}[t]
 \caption{
 Typical fractions of simulation runs in which the criterion for expected weak insensitivity was satisfied by various aggregation rules
 under distribution $\mathcal{D}$, normalized Kendall-Tau distance, and $\Delta$ as defined
 }
 \begin{center}
 \begin{tabular}{|>{\small}p{1.74cm}|>{\small}c||>{\small}p{1.48cm}|>{\small}c||>{\small}p{1.2cm}|>{\small}c|}
 \hline
 \T \B 
 Dictatorship & 1.00 & Smith set & .998 & Schulze & .997   \\ \hline
 \T \B 
 Copeland & .97  & Plurality & .95 & Borda & .92 \\ \hline 
 \end{tabular}
 \begin{tabular}{|>{\small}p{1.05cm}|>{\small}c||>{\small}p{1.05cm}|>{\small}c||>{\small}p{.95cm}|>{\small}c||>{\small}p{.5cm}|>{\small}c|}
 \T \B 
 Minmax & .87 & Kemeny & .85 & Bucklin & .82   & Veto & .69 \\ \hline
 \end{tabular}
 \end{center}
 \label{tab:weak_ins_rules}
   \vspace{-1.5mm}
 \end{table}
 
 \begin{lemma}
 \label{lem:guarantee}
 Given a distance measure and a $\Delta$, with a preference aggregation rule satisfying expected weak insensitivity property under  distribution $\mathcal{D}$,
 if the expected distance between every node and 
 the representative set $M$ is at most $\epsilon_d \in [0,1]$,
 then the expected error incurred in using $f(Q')$ instead of $f(P)$ is at most $\epsilon_d$. 
 That is, for $\epsilon_d \in [0,1] , \; $
 \begin{displaymath}
 d(M,i) \leq \epsilon_d \;,\; \forall i \in N \implies \mathbb{E} [ f(P) \, \Delta \, f(Q') ] \leq \epsilon_d
 \end{displaymath}
 \end{lemma}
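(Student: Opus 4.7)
The plan is to reduce the statement to a direct application of the expected weak insensitivity property by identifying the transition from $P$ to $Q'$ as an instance of the deviation scenario in Definition~\ref{def:gaussweakins}.

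First I would unpack $Q'$ node by node. Recall from Section~\ref{sec:aggr_repr} that $Q'$ is built from $P$ by replacing the preference $p_i$ of each $i \in N$ by the preference of its representative $\Phi(M,i) \in M$. For each $i$, let $\eta_i$ denote the (random) distance between $i$'s own preference and the preference it receives in $Q'$. By the modeling setup of Section~\ref{sec:model_pairs}, $\eta_i$ is distributed as $\mathcal{D}$ with mean $\delta_i = d(\Phi(M,i), i)$ and standard deviation equal to the corresponding $\sigma$-parameter of the pair $\{i,\Phi(M,i)\}$, which is by construction a permissible $\sigma_d$ for the mean $\delta_i$. Note that for $i \in M$ we have $\Phi(M,i) = i$ and thus $\delta_i = 0$, consistent with the convention $d(i,i) = 0$.

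Next I would bound these means uniformly. Using Equations~(\ref{eqn:dist}) and~(\ref{eqn:repr}),
\begin{equation}
\nonumber
\delta_i \;=\; d(\Phi(M,i), i) \;=\; \min_{j \in M} d(j,i) \;=\; d(M,i) \;\leq\; \epsilon_d,
\end{equation}
where the last inequality is the hypothesis of the lemma. Hence every $\eta_i$ is drawn from $\mathcal{D}$ with mean $\delta_i \leq \epsilon_d$ and some permissible $\sigma_d$, which matches precisely the antecedent of Criterion~(\ref{eqn:weak_ins}) with $\mu_d := \epsilon_d$.

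Finally I would invoke the expected weak insensitivity property of $f$ with $P' := Q'$, which immediately yields $\mathbb{E}[f(P)\,\Delta\,f(Q')] \leq \epsilon_d$, as required. The only subtle point—and really the only place one has to think—is in the very first step: justifying that the random rewiring from $P$ to $Q'$ genuinely produces, pair by pair, per-voter deviations whose distribution is of the form $\mathcal{D}$ with mean $\delta_i$ and some permissible $\sigma_d$. This is immediate from how the edge parameters and $d(\cdot,\cdot)$ were defined in Section~\ref{sec:model_pairs}, so once it is stated explicitly, the rest of the argument is a one-line application of Definition~\ref{def:gaussweakins}.
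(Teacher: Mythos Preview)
Your proposal is correct and follows essentially the same route as the paper's own proof: both identify the passage $P \mapsto Q'$ with a per-node deviation $\eta_i \sim \mathcal{D}$ of mean $\delta_i = d(\Phi(M,i),i) = d(M,i) \leq \epsilon_d$, then set $\mu_d = \epsilon_d$, $P' = Q'$ and invoke Definition~\ref{def:gaussweakins}. The paper additionally makes explicit that the expectation in $\mathbb{E}[f(P)\,\Delta\,f(Q')]$ (over topics) coincides with the expectation in Criterion~(\ref{eqn:weak_ins}) (over deviations), which is exactly the ``subtle point'' you flag in your final paragraph.
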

 \begin{proof}
 In the preference profile $P$ of all nodes, the preference of any node $i \in N$ is replaced by the preference of its representative node $p=\Phi(M,i)$
 to obtain $Q'$. 
 From Equations~(\ref{eqn:dist}), (\ref{eqn:repr}), and the hypothesis, we have $d(p,i) \leq \epsilon_d$.
 
 Since in $P$, preference of every $i$ is replaced by that of the corresponding $p$ to obtain $Q'$, and distance between $i$ and $p$ is distributed according to  distribution $\mathcal{D}$ with mean $d(p,i)$
 and some standard deviation $\sigma_d$, 
 the above is equivalent to node $i$ deviating its preference by some value which is drawn from distribution $\mathcal{D}$ with mean $d(p,i) = d(M,i)$. 
 So we can map these variables to the corresponding variables in Equation~(\ref{eqn:weak_ins}) as follows: $\delta_i = d(M,i) \; \forall i$, $\mu_d = \epsilon_d$, and $P'=Q'$.
 Also, recall that in $\mathbb{E} [ f(P) \, \Delta \, f(P') ]$, the expectation is over varying modified preferences of the nodes, while in $\mathbb{E} [ f(P) \, \Delta \, f(Q') ]$, the expectation is over varying preferences of the nodes' representatives in $M$ with respect to different topics (and hence preferences) of the nodes.
 These are equivalent given $P'=Q'$.
 As this argument is valid for any permissible $\sigma_d$, the result follows.
 \end{proof}
 
 So under the proposed model and for aggregation rules satisfying the expected weak insensitivity property, Lemma~\ref{lem:guarantee} establishes a 
  relation between (a) the closeness of the chosen representative set to the population in terms of expected distance and (b) the error incurred in the aggregate preference if that set is chosen as the representative set.
 We now return to our goal of abstracting the problem of determining a representative set, by proposing an approach that is agnostic to the aggregation rule being used.

 \subsection{\mbox{Objective Functions in the Abstracted Problem}}
 \label{sec:objfn_abstract}
 
 Recall that $c(\cdot,\cdot) = 1 - d(\cdot,\cdot)$. 
 Our objective is now to find a set of critical nodes $M$ that maximizes a certain objective function, with the hope of minimizing  $\mathbb{E} [ f(P) \, \Delta \, f(R) ]$ where $R=Q'$ in our case.
 As the aggregation rule is anonymous, in order to ensure that the approach works well, even for rules such as random dictatorship, the worst-case objective function for the problem under consideration, representing least expected similarity, is
 \begin{equation}
 \label{eqn:influence_min}
 \rho(S) = \min_{i \in N} c(S,i)
 \end{equation}
 The above is equivalent to: $\max_{i \in N} d(S,i) = 1-\rho(S)$.
 Thus $\epsilon_d = 1-\rho(S)$ in Lemma~\ref{lem:guarantee}, and so this objective function offers a guarantee on $\mathbb{E} [ f(P) \, \Delta \, f(Q') ]$
 for aggregation rules satisfying the expected weak insensitivity property.
 We will provide a detailed analysis for the performance guarantee of an algorithm that aims to maximize $\rho(S)$, in Section~\ref{sec:greedymin_guarantee}.
 
 Now the above worst-case objective function ensures that our approach works well even for aggregation rules such as random dictatorship.
 However, such extreme aggregation rules are seldom used in real-world scenarios; hence, another surrogate objective function, representing average expected similarity, or equivalently sum of expected similarities, is
 \begin{equation}
 \label{eqn:influence_sum}
 \psi(S) = \sum_{i \in N} c(S,i)
 \end{equation}
         We will look into the desirable properties of an algorithm that aims to maximize $\psi(S)$, in Section~\ref{sec:cooperativeview}.
 
 We now turn towards the problem of maximizing the above two surrogate objective functions.
 \begin{proposition}
 \label{prop:nphard}
 Given constants $\chi$ and $\omega$, \\
 (a) it is NP-hard to determine whether there exists a set $M$ consisting of $k$ nodes such that $\rho(M) \geq \chi$, and \\
 (b) it is NP-hard to determine whether there exists a set $M$ consisting of $k$ nodes such that $\psi(M) \geq \omega$.
 \end{proposition}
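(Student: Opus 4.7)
The plan is to reduce both decision problems from the classical \emph{Dominating Set} problem, which is known to be NP-complete. Given an instance $(G=(V,E), k)$ of Dominating Set on $|V|=n$ vertices, I would construct an abstract instance of our problem with node set $N = V$ by defining
\begin{equation*}
c(i,j) = \begin{cases} 1 & \text{if } i = j \text{ or } \{i,j\} \in E, \\ 0 & \text{otherwise}, \end{cases}
\end{equation*}
and correspondingly $d(i,j) = 1 - c(i,j)$. This transformation is clearly polynomial in the size of $G$, and the cardinality requested remains $k$.

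For part (a), I would take the threshold $\chi = 1$. From the definition $\rho(M) = \min_{i \in N} \max_{j \in M} c(j,i)$ and the fact that the similarity values lie in $\{0,1\}$, $\rho(M) \geq 1$ holds iff every $i \in N$ admits some $j \in M$ with $c(j,i) = 1$, i.e., some $j \in M$ that either equals $i$ or is a neighbor of $i$ in $G$. This is exactly the condition that $M$ be a dominating set of size $k$, which yields the reduction.

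For part (b), I would take $\omega = n$. Since each term $c(M,i) \leq 1$, the inequality $\psi(M) = \sum_{i \in N} c(M,i) \geq n$ can hold only when $c(M,i) = 1$ for every $i$, which again forces $M$ to be a dominating set; conversely, any dominating set of size $k$ achieves $\psi(M) = n$. The same construction therefore settles both parts, and the full proposition follows.

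The one step that might require additional care is ensuring that the constructed similarity matrix is admissible for the problem as stated. In Section~\ref{sec:abstraction} the objective functions $\rho$ and $\psi$ are defined purely in terms of the given $c(\cdot,\cdot)$ entries, so no realizability assumption is needed and the reduction is immediate. If one instead insisted that the $d(\cdot,\cdot)$ values arise from an actual preference profile on a network (as in the models of Section~\ref{sec:model_spread}), a small gadget would be needed to approximate the $\{0,1\}$ entries by realizable mean Kendall-Tau distances; I expect this to be routine since with $r \geq 3$ alternatives the achievable means form a sufficiently dense subset of $[0,1]$, so the obstacle is bookkeeping rather than anything fundamental.
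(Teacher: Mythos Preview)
Your proposal is correct and takes essentially the same approach as the paper: both reduce from \textsc{Dominating Set} by assigning a high similarity to adjacent pairs and a low similarity to non-adjacent pairs, then choosing $\chi$ and $\omega$ so that the threshold is met exactly when every node is dominated. The only cosmetic difference is that the paper uses interior values $c(i,j)\in\{0.9,0.6\}$ (with $\chi=0.9$ and $\omega=k+0.9(n-k)$) rather than your $\{1,0\}$, but the argument is identical.
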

 
 We provide a proof of Proposition~\ref{prop:nphard} in Appendix \ref{app:nphard_proof}. 
 A function $h(\cdot)$ is said to be {\em submodular} if,
 for all $v \in N \setminus T$ and for all $S,T$ such that $S \subset T \subset N$,
 \begin{equation}
 \label{eqn:submodular}
 \nonumber
 h(S \cup \{v\}) - h(S) \geq h(T \cup \{v\}) - h(T)
 \end{equation}

 \begin{proposition}
 \label{prop:submodularity}
 The objective functions $\rho(\cdot)$ and $\psi(\cdot)$ 
 are non-negative, monotone increasing, and submodular.
 \end{proposition}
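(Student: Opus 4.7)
My plan is to reduce both claims to the per-node coverage function $g_i(S) := c(S,i) = \max_{j \in S} c(j,i)$ for each $i \in N$ (with the convention $g_i(\emptyset) = 0$). If each $g_i$ is non-negative, monotone non-decreasing, and submodular as a set function on $N$, then the three properties for $\psi(\cdot) = \sum_{i \in N} g_i(\cdot)$ follow immediately by linearity, and for $\rho(\cdot) = \min_{i \in N} g_i(\cdot)$ non-negativity and monotonicity transfer at once because $\min$ preserves both. The remaining and most delicate step is submodularity of $\rho$.

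Non-negativity and monotonicity of $g_i$ are essentially free. Since the normalized Kendall-Tau distance lies in $[0,1]$, we have $c(\cdot,\cdot) \in [0,1]$; and enlarging $S$ can only decrease $\min_{j \in S} d(j,i)$, hence only increase $g_i(S)$. Submodularity of $g_i$ is the standard max-coverage observation: for $S \subset T \subseteq N$ and $v \notin T$,
\[
g_i(S \cup \{v\}) - g_i(S) \;=\; \max\{0,\; c(v,i) - g_i(S)\},
\]
which is non-increasing in the baseline $g_i(S)$ and hence, by monotonicity of $g_i$, non-increasing as the base set grows. Submodularity of $\psi$ follows at once since submodularity is closed under non-negative linear combinations.

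The main obstacle is submodularity of $\rho = \min_{i \in N} g_i$, because a pointwise minimum of submodular functions is not submodular in general and the purely pointwise argument used for $\psi$ does not transfer. My approach is to work directly at the level of marginals: for $S \subset T \subseteq N$ and $v \notin T$, let $j^*$ attain the outer minimum in $\rho(S \cup \{v\})$ and try to dominate $\rho(S \cup \{v\}) - \rho(S)$ by the per-node submodular marginal $g_{j^*}(S \cup \{v\}) - g_{j^*}(S)$, and analogously on the $T$ side. The subtle point is that the $\arg\min$ node may shift between $S$ and $T$ and upon adding $v$, so the plan is to case-split on whether the bottleneck node changes, exploiting the structural fact that $c(i,i) = 1$ forces $g_i(S) = 1$ whenever $i \in S$ and therefore sharply restricts which nodes can remain the bottleneck once included in the set. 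This bookkeeping around the shifting $\arg\min$ is the technical heart of the argument and where I expect the bulk of the effort to lie.
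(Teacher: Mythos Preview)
Your treatment of $\psi(\cdot)$ is correct and is essentially the paper's own argument: the paper also proves the per-node inequality $c(S\cup\{v\},i)-c(S,i)\ge c(T\cup\{v\},i)-c(T,i)$ for all $i$ and then sums. Nothing more is needed there.

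The genuine problem is $\rho(\cdot)$. Your instinct that ``a pointwise minimum of submodular functions is not submodular in general'' is exactly right, and in fact it bites here: $\rho$ is \emph{not} submodular, so no amount of bookkeeping over the shifting $\arg\min$ will rescue the argument. A three-node counterexample suffices. Take $N=\{1,2,3\}$ with $c(1,2)=0.7$, $c(1,3)=0.3$, $c(2,3)=0.5$, $c(i,i)=1$ (equivalently $d(1,2)=0.3$, $d(1,3)=0.7$, $d(2,3)=0.5$, which even satisfies the triangle inequality). Then
\[
\rho(\{2\})=0.5,\quad \rho(\{1,2\})=0.5,\quad \rho(\{2,3\})=0.7,\quad \rho(\{1,2,3\})=1,
\]
so with $S=\{2\}\subset T=\{1,2\}$ and $v=3$ we get
\[
\rho(S\cup\{v\})-\rho(S)=0.2 \;<\; 0.5=\rho(T\cup\{v\})-\rho(T),
\]
violating submodularity. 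The paper's own proof only treats $\psi$ in detail and then asserts ``the proof for $\rho(\cdot)$ is similar''; that assertion is incorrect, since the per-node inequality does not transfer through a minimum. So your planned case analysis cannot close the gap for $\rho$: the obstacle is not technical but that the stated claim for $\rho$ is false as written. (Non-negativity and monotonicity of $\rho$ are fine, and those are all that the downstream use in Theorem~\ref{thm:approx_guarantee} would need if one instead invoked a $k$-center-style guarantee rather than the $(1-1/e)$ submodular bound.)
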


 We provide a proof of Proposition~\ref{prop:submodularity} in Appendix \ref{app:submod_proof}. 
 For a non-negative, monotone increasing, and submodular function, the greedy hill-climbing algorithm (selecting elements one at a time, each time choosing an element that provides the largest marginal increase in the function value), gives a $(1-\frac{1}{e}) \approx 0.63$-approximation to the optimal solution~\cite{nemhauser1978analysis}.
 As the considered objective functions in Equations~(\ref{eqn:influence_min}) and (\ref{eqn:influence_sum}) satisfy these properties, we use the greedy hill-climbing algorithm to obtain a good approximation to the optimal solution.
 Moreover, as desired, the functions are agnostic to the aggregation rule being used.
 
 We next devise algorithms for finding a representative set, present their performance with the aid of extensive experimentation, and provide detailed analysis of the results.
 
 
 \section{Selection of the Representative Set: \\Algorithms and Performance}
 \label{sec:results}
 
 Recall that the preference profile of $N$ is $P$, that of $M$ is $Q$, and that obtained by replacing every node's preference in $P$ by that of its uniquely chosen representative in $M$, is $Q'$.
 Given the number of nodes to be selected $k$, our objective is to find a set $M$ of size $k$ such that 
 $\mathbb{E} [ f(P) \, \Delta \, f(R) ]$ is minimized, where $R=Q'$ or $Q$ depending on the algorithm.

 \subsection{Algorithms for Finding Representatives}
 \label{sec:algos}
 
 We now describe the algorithms we consider in our study.

 \begin{itemize}[\setlength{\labelwidth}{\widthof{\textbullet}}
 \setlength{\labelsep}{7.5pt}
  \setlength{\IEEElabelindent}{0pt}
     \IEEEiedlabeljustifyl ]
 \setlength\itemsep{.5em}
 \item \textbf{Greedy-orig} (Greedy hill-climbing for maximizing $1-\mathbb{E} [ f(P) \, \Delta \, f(Q') ]$):
 Initialize $M$ to $\{\}$.
 Until $|M|=k$, choose a node $j \in N \setminus M$ that 
 minimizes the expected error or equivalently, maximizes $1-\mathbb{E} [ f(P) \, \Delta \, f( {Q}_M') ]$,
 where $Q_M'$ is the preference profile obtained by replacing every node's preference in $P$ by the preference of its uniquely chosen representative in $M$.
 Note that the optimal set would depend on the aggregation rule $f$.
 Its time complexity for obtaining $M$ and hence $R$ is $O(kn\mathcal{T}_f)$, where $\mathcal{T}_f$ is the time complexity of obtaining an aggregate preference using the aggregation rule $f$. For instance, $\mathcal{T}_f$ is $O(rn)$ for plurality, $O(1)$ for dictatorship.

 \item \textbf{Greedy-sum} (Greedy hill-climbing for maximizing $\psi(\cdot)$):
 Initialize $M$ to $\{\}$.
 Until $|M|=k$, choose a node $j \in N \setminus M$ that 
 maximizes $\psi(M \cup \{j\}) - \psi(M)$.
 Then obtain $f(R) = f(Q')$.
 If the similarity matrix is known, its time complexity for obtaining $M$ and hence $R$ is $O(kn^2)$.
 If the similarity matrix is unknown, the time complexity for deriving it is largely decided by the model used for deducing the mean distances between all pairs of nodes. 
 
 \item \textbf{Greedy-min} (Greedy hill-climbing for maximizing $\rho(\cdot)$):
 Similar to Greedy-sum, with $\rho(\cdot)$ instead of $\psi(\cdot)$.
 
 \item \textbf{Between-cen} (Betweenness centrality heuristic): Choose $k$ nodes having the maximum values of Freeman's betweenness centrality (edge weights being dissimilarities).
 Then obtain $f(R) = f(Q)$.
 Its time complexity for obtaining $M$ is $O(nm + n^2 \log n)$.
 
 \item \textbf{Degree-cen} (Degree centrality heuristic): Choose $k$ nodes having the maximum weighted degrees (edge weights being similarities).
 Then obtain $f(R) = f(Q)$.
 Its time complexity for obtaining $M$ is $O(nk+n\log n)$.
 \item \textbf{Random-poll} (Random polling):
 Choose $k$ nodes uniformly at random.
 Then obtain $f(R) = f(Q)$.
 It is  an important baseline, since it is
 the most employed method in practice.
 Also, it has been claimed in the literature that it is optimal to ignore the social network \cite{conitzer2012should} and that random node selection performs well in practice \cite{leskovec2006sampling}.
 
 \item Other centrality measures 
 (PageRank, Katz, eigenvector):  
 In a weighted network like the one under study, the sum of weights of edges adjacent on a node could exceed 1.
 So the edge weights are required to be attenuated for computing these measures.
 For instance, while employing PageRank, the edges need to be converted to directed edges \cite{chen2010scalable} (perhaps by adding self loops as well) and reweighed so that the outgoing edge weights from any node, sum to 1.
 Katz centrality involves a parameter $\alpha$ which is required to be less than the reciprocal of the largest eigenvalue of the weighted adjacency matrix, thus effectively reweighing the weights of the walks. 
 \end{itemize}
 
 In all algorithms, 
 the time complexity of computing $f(R)$ depends on the aggregation rule $f$.
 For dictatorship, if the dictator is not in $M$, Random-poll outputs the preference of a node in $M$ chosen uniformly at random, else it outputs the dictator's preference; other algorithms output the preference of the dictator's representative in $M$.
 We now present 
  desirable properties of Greedy-min and Greedy-sum algorithms.

 \subsection{Performance Guarantee for Greedy-min Algorithm}
 \label{sec:greedymin_guarantee}
 
 Here, we show the performance guarantee of Greedy-min.
 \begin{theorem}
 \label{thm:approx_guarantee}
 For an aggregation rule satisfying expected weak insensitivity, the error incurred in using the aggregate preference given by the Greedy-min algorithm instead of the actual aggregate preference, is at most $\left(1 - \left(1-\frac{1}{e}\right) \rho^*\right)$, where $\rho^* = \max_{S \subseteq N,|S|\leq k} \rho(S)$.
 \end{theorem}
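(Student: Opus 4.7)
The proof plan is to chain together three results that are already in place: the classical Nemhauser--Wolsey--Fisher bound for greedy maximization of a monotone submodular function, the definition of $\rho(\cdot)$ in terms of worst-case expected distance, and Lemma~\ref{lem:guarantee} which converts a bound on this worst-case distance into a bound on the aggregation error for rules satisfying expected weak insensitivity.

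First I would invoke Proposition~\ref{prop:submodularity}, which asserts that $\rho(\cdot)$ is non-negative, monotone increasing, and submodular. Since the Greedy-min algorithm is precisely the standard greedy hill-climbing procedure applied to $\rho$, the Nemhauser--Wolsey--Fisher theorem gives that the set $M$ it returns satisfies $\rho(M) \geq \bigl(1 - \tfrac{1}{e}\bigr) \rho^{*}$, where $\rho^{*} = \max_{S \subseteq N, |S| \leq k} \rho(S)$.

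Next I would rewrite this bound in terms of distances rather than similarities. By definition, $\rho(M) = \min_{i \in N} c(M,i) = 1 - \max_{i \in N} d(M,i)$, so the greedy guarantee becomes
\begin{equation}
\nonumber
\max_{i \in N} d(M,i) \;\leq\; 1 - \Bigl(1-\tfrac{1}{e}\Bigr)\rho^{*},
\end{equation}
which is to say $d(M,i) \leq \epsilon_d$ for every $i \in N$ with $\epsilon_d := 1 - (1-1/e)\rho^{*}$. Note that $\epsilon_d \in [0,1]$ since $\rho^{*} \in [0,1]$, so $\epsilon_d$ lies in the range for which Lemma~\ref{lem:guarantee} is stated.

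Finally, I would apply Lemma~\ref{lem:guarantee} directly with this value of $\epsilon_d$. Since the aggregation rule under consideration satisfies expected weak insensitivity and the hypothesis $d(M,i) \leq \epsilon_d$ for all $i \in N$ holds, the lemma yields $\mathbb{E}[f(P)\,\Delta\,f(Q')] \leq \epsilon_d = 1 - (1-1/e)\rho^{*}$, which is exactly the desired bound. There is no genuine obstacle here; the theorem is a clean composition of three results, and the only care needed is to verify that Greedy-min really is greedy hill-climbing on $\rho$ (so that Nemhauser's bound applies verbatim) and that the transformation between $c$ and $d$ is handled consistently so that the hypotheses of Lemma~\ref{lem:guarantee} are satisfied.
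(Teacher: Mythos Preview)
Your proposal is correct and follows essentially the same approach as the paper's proof: invoke the $(1-1/e)$ greedy guarantee for $\rho(\cdot)$, translate this into a uniform bound $d(M,i)\le 1-(1-1/e)\rho^*$ via $\rho(M)=1-\max_i d(M,i)$, and then apply Lemma~\ref{lem:guarantee}. If anything, your write-up is slightly more explicit, since you cite Proposition~\ref{prop:submodularity} for the submodularity premise and note that $\epsilon_d\in[0,1]$ so the lemma's hypothesis is met.
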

 \begin{proof}
 Let $S^G$ be a set obtained using greedy hill-climbing algorithm for maximizing $\rho(\cdot)$. Since greedy hill-climbing provides a $\left(1-\frac{1}{e}\right)$-approximation to the optimal solution, we have
 
 \vspace{-3mm}
 \begin{small}
 \begin{align*}
  &\;\;\;\;\;\;\;\;\;\;
  \rho(S^G) = \min_{i \in N} c(S^G,i) \geq \left(1-\frac{1}{e}\right) \rho^* \\
  &\implies
   1 - \max_{i \in N} d(S^G,i) \geq \left(1-\frac{1}{e}\right) \rho^* \\
 &\implies
 \max_{i \in N} d(S^G,i) \leq  1-\left(1-\frac{1}{e}\right) \rho^*\\
 &\implies
 d(S^G,i) \leq  1-\left(1-\frac{1}{e}\right) \rho^*, \;\;\; \forall i\in N
 \end{align*}
 \end{small}

 For an aggregation rule satisfying expected weak insensitivity property, from Lemma~\ref{lem:guarantee}, when the representative set $M=S^G$, we have
 $ 
 \nonumber
 \mathbb{E} [ f(P) \, \Delta \, f(Q') ] \leq 1 - \left(1-\frac{1}{e}\right) \rho^*
 $. 
 \end{proof}
 
 It is to be noted that, though the approximation ratio given by the greedy algorithm is modest in theory, it has been observed in several domains that its performance is close to optimal in practice when it comes to optimizing non-negative, monotone increasing, submodular functions.

 \subsection{A Cooperative Game Theoretic Viewpoint of Greedy-sum Algorithm}
 \label{sec:cooperativeview}

 Shapley value is known to act as a good measure for node selection problems in social networks, particularly that of influence maximization \cite{narayanam2011shapley,michalak2013efficient}.
 We have seen that, to maximize the objective function
 $
 \psi(S) = \sum_{i \in N} c(S,i)
 $,
 the greedy hill-climbing algorithm first chooses a node $j$ that maximizes
 $\sum_{i \in N} c(i,j)$ or equivalently $\sum_{i \in N, i\neq j} c(i,j)$ (since $c(j,j)=1, \forall j$).
 It has been shown in \cite{garg2013novel} that $\sum_{i \in N, i\neq j} c(i,j)$ is the Shapley value of player $j$, in a convex Transferable Utility (TU) game $(N,\nu)$ with the characteristic function 
 $ 
 \nu(S) = \sum_{{i,j \in S, i \neq j}} c(i,j) 
 $. 
 This characteristic function can be viewed as an indication of how tightly knit a group is, or how similar the members of a set $S$ are to each other.
 Let $\phi(\nu),Nu(\nu),Gv(\nu),\tau(\nu)$ be Shapley value, Nucleolus, Gately point, $\tau$-value of the TU game $(N,\nu)$.

 \begin{theorem} 
 \label{shapeqnu}
  For the TU game defined by  
  $\nu(S) = \sum_{{i,j \in S, i \neq j}} c(i,j) $,
 $ 
 \nonumber
  \phi(\nu)=Nu(\nu)=Gv(\nu)=\tau(\nu).
 $ 
 \end{theorem}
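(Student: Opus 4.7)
The plan is to exploit the decomposable structure of $\nu$ as a sum of pair contributions and to chain standard solution-concept computations, culminating in a prekernel argument for the nucleolus. First I write $\nu = \sum_{\{i,j\} \subseteq N} 2c(i,j)\, u_{\{i,j\}}$, where $u_{\{i,j\}}$ is the unanimity game on $\{i,j\}$; linearity of the Shapley value then reproduces the identity $\phi_i(\nu) = \sum_{j \neq i} c(i,j)$ already observed in \cite{garg2013novel}. I also record the three identities that will be reused throughout: $\nu(\{i\}) = 0$, $\nu(N) = \sum_i \phi_i$, and the utopia payoff $M_i := \nu(N) - \nu(N\setminus\{i\}) = 2\phi_i$.

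From these, the Gately point is immediate: the propensity to disrupt evaluated at $\phi$ is
\[
d_i(\phi) \;=\; \frac{\nu(N) - \phi_i - \nu(N\setminus\{i\})}{\phi_i - \nu(\{i\})} \;=\; \frac{\phi_i}{\phi_i} \;=\; 1
\]
for every $i$, so $\phi$ is the (efficient) allocation equalizing all propensities, and $Gv(\nu) = \phi(\nu)$. For the $\tau$-value I compute the minimum right $m_i = \max_{S \ni i}\bigl[\nu(S) - \sum_{j \in S\setminus\{i\}} M_j\bigr]$; expanding $\nu(S)$ and substituting $M_j = 2\phi_j$, the cross terms collapse and one is left with $-\nu(S\setminus\{i\}) - 2\sum_{j \in S\setminus\{i\},\, k \notin S} c(j,k) \leq 0$, attained with equality at $S = \{i\}$, so $m_i = 0$. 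The $\tau$-value is the unique efficient convex combination of $m$ and $M$; with $m=0$ and $M = 2\phi$, this forces $\tau = \phi$.

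The main obstacle is the nucleolus, for which I propose to show that $\phi$ lies in the kernel and then invoke the classical theorem of Maschler, Peleg and Shapley stating that in a convex game the kernel is a singleton coinciding with the nucleolus. Convexity of $\nu$ follows from the direct expansion $\nu(S \cup T) + \nu(S \cap T) - \nu(S) - \nu(T) = 2\sum_{i \in T\setminus S,\, j \in S\setminus T} c(i,j) \geq 0$. That $\phi$ lies in the core is then $\phi(S) - \nu(S) = \sum_{i \in S,\, j \notin S} c(i,j) \geq 0$, which also shows that the excess at $\phi$ is the negated cut, $e(S,\phi) = -\sum_{i \in S,\, j \notin S} c(i,j)$. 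The crucial observation is that, for any ordered pair $i,j$, complementation $S \mapsto N\setminus S$ is a bijection between $\{S : i \in S,\, j \notin S\}$ and $\{S : j \in S,\, i \notin S\}$ that preserves the cut value; this forces $s_{ij}(\phi) = s_{ji}(\phi)$ for all $i,j$, placing $\phi$ in the prekernel and, since $\phi_i > \nu(\{i\}) = 0$ for every non-null player, in the kernel. The one subtlety to handle carefully is an isolated player $i$ with $\phi_i = 0$: any such $i$ has $c(i,j) = 0$ for every $j$, so every efficient solution concept under consideration assigns $0$ to $i$ and the argument reduces to the subgame on the remaining players.
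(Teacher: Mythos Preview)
Your argument is correct. For the Gately point and the $\tau$-value you follow essentially the same route as the paper: both compute the utopia vector $M_i=\nu(N)-\nu(N\setminus\{i\})=2\phi_i$ and the lower vector $m_i=0$, obtain $\lambda=\tfrac12$, and verify that the propensity to disrupt equals $1$ at $\phi$. (A minor difference: you derive $m_i=0$ from the general minimum-right formula, whereas the paper uses the convex-game shortcut $m_i=\nu(\{i\})$; and you read $\sum_{i,j\in S,\,i\neq j}$ as an ordered sum, matching the main-text Shapley formula $\phi_j=\sum_{i\neq j}c(i,j)$, while the appendix treats it as unordered with $\phi_j=\tfrac12\sum_{i\neq j}c(i,j)$---an immaterial factor of two.)

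The genuine divergence is the nucleolus. The paper simply invokes \cite{chun2007coincidence}, which handles games that decompose as sums of two-player pieces. Your route is self-contained: you establish convexity directly, observe that the excess $e(S,\phi)$ equals minus the cut and is therefore invariant under $S\mapsto N\setminus S$, conclude $s_{ij}(\phi)=s_{ji}(\phi)$ so that $\phi$ lies in the (pre)kernel, and then invoke Maschler--Peleg--Shapley to identify the kernel with the nucleolus in convex games. This buys you an explicit structural reason for the coincidence (the complementation symmetry of the cut), at the cost of importing the MPS theorem rather than Chun's result; both are standard, so the trade is even. Your null-player caveat is needed only for the Gately denominator, not for the kernel step---prekernel membership together with individual rationality already places $\phi$ in the kernel even when some $\phi_i=0$.
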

 We provide a proof of Theorem~\ref{shapeqnu} in Appendix \ref{app:shapeqnu}. 
 So the Greedy-sum algorithm aims to maximize a term that is unanimously suggested by several solution concepts for a TU game capturing the similarities within a set.
 In other words, the solution concepts unanimously suggest that the first node chosen by the Greedy-sum algorithm, is 
 on average most similar to different subsets of the population.

 \subsection{Experimental Observations}
 \label{sec:exp}

 After obtaining the representative set using the aforementioned algorithms, we tested their performance on $\mathbb{T}=10^4$ topics or preference profiles generated using the RPM-S model (with the assigned neighbor chosen in a random way) on our Facebook data. 
 Owing to the nature of the Random-poll algorithm, we ran it sufficient number of times to get an independent representative set each time, and then defined the performance as the average over all the runs.
 The values of $\mathbb{E} [ f(P) \, \Delta \, f(R) ]$ were computed using extensive simulations with the considered aggregation rules. 

 We observed that for any node $i$, in all algorithms except Random-poll, the candidate set $\argmin_{j \in S} d(j,i)$ (see Equation (\ref{eqn:repr})) was a singleton for low values of $k$. For higher values of $k$ (higher than 7), the number of candidates were usually less than or equal to 4, with a maximum of 10 for $k=50$ in one instance. We thus ran the experiments several times; we observed that changes in the error plots were not very significant.
 Also, the aggregate preferences $f(P)$ and $f(R)$ both consisted of one preference in almost all the runs, so the error
 $\mathbb{E} [ f(P) \, \Delta \, f(R) ]$ was equivalent to  Kendall-Tau distance between the actual aggregate preference and the obtained aggregate preference (see Equation (\ref{eqn:Delta})).

 Figure \ref{fig:extra}(a) shows the plots for the worst case of Random Dictatorship, that is, when the randomly chosen dictator is the most dissimilar to the chosen representative set. 
 Greedy-min performed the best owing to it ensuring that no node in the network is very dissimilar to the chosen representative set.
 The error for Greedy-orig could not be plotted since the objective function cannot be computed in this case.

 The plots for all non-dictatorial aggregation rules were similar (albeit with different scaling) to the ones plotted in Figure~\ref{fig:plots_pasn}. 
 Our key observations are as follows:
 
 \begin{figure}
 \centering
 \includegraphics[scale=0.63]{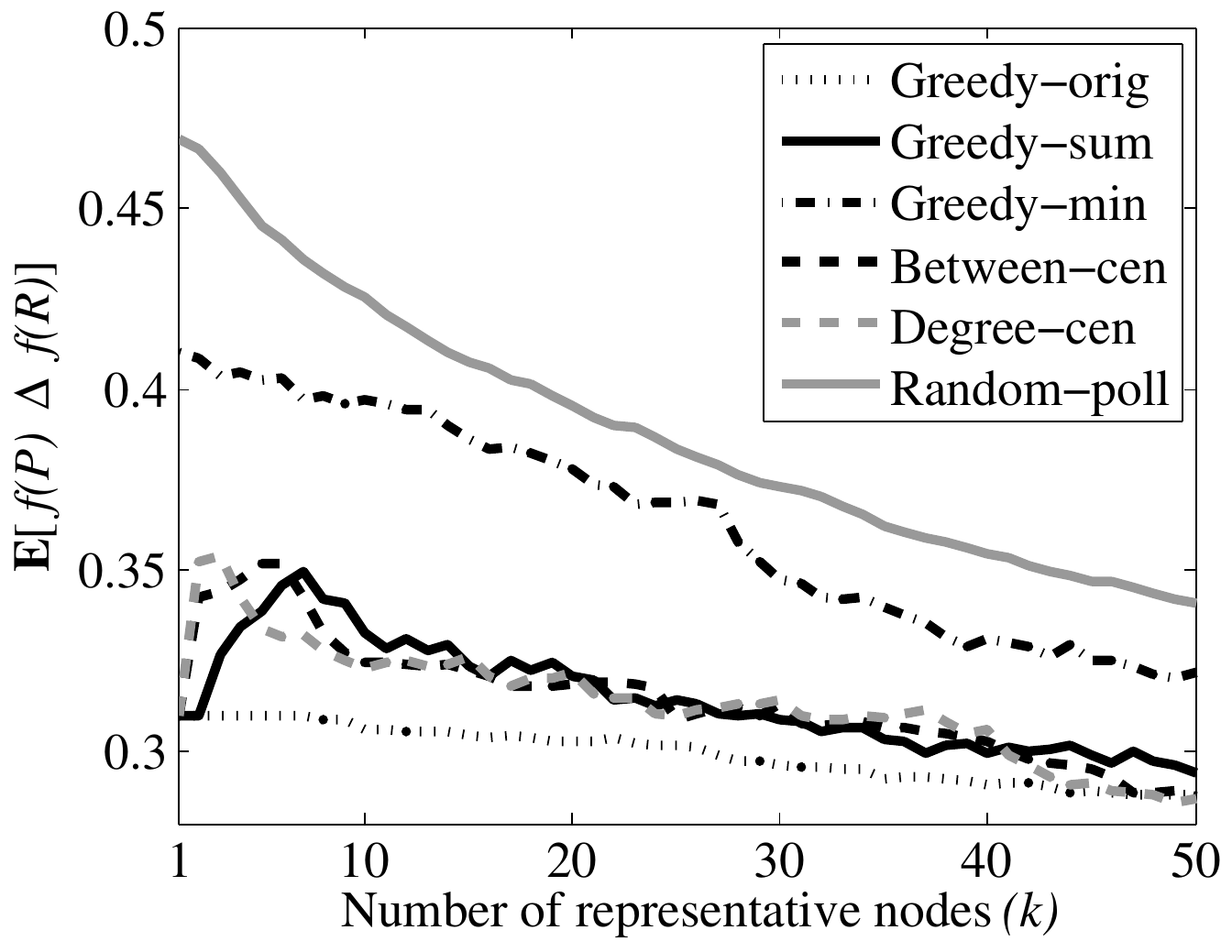} 
    \vspace{2mm}
 \caption{
 Comparison among algorithms for Plurality aggregation rule
 }
 \label{fig:plots_pasn}
    \vspace{2mm}
 \end{figure}

 \begin{itemize}[\setlength{\labelwidth}{\widthof{\textbullet}}
 \setlength{\labelsep}{7.5pt}
  \setlength{\IEEElabelindent}{0pt}
     \IEEEiedlabeljustifyl ]
     
     \setlength\itemsep{.5em}
 
   \vspace{5mm}
 \item 
 \textbf{Performance of Greedy-orig.}
 Greedy-orig performed the best throughout, however it had unacceptably high running time (order of days) even for computationally fast aggregation rules such as plurality; so it is practically infeasible to run this algorithm for computationally intensive rules, for example, Kemeny.
 
    \vspace{3mm}
 \item 
 \textbf{Performance of Greedy-sum.}
 Greedy-sum performed very well;
 but its plots displayed non-monotonicity especially in the lower range of $k$, and so a higher $k$ might not always lead to a better result.
 Greedy-sum, with MSM-SP as a precursor, attempts to find nodes which are closest to other nodes on average, where closeness is based on the similarity deduced using adaptation of the shortest path algorithm (as described in Section \ref{sec:msm_sp}). 
 This is on similar lines as finding influential nodes for diffusing information, which would lead to other nodes being influenced with maximum probabilities on average.
 
    \vspace{3mm}
 \item 
 \textbf{Performance of Greedy-min.}
 Greedy-min performed better than Random-poll for low values of $k$; this difference in performance decreased for higher values of $k$.
 The effect of satisfaction or otherwise of expected weak insensitivity was not very prominent, because the property is not violated by an appreciable enough margin for any aggregation rule.
 Nonetheless, the expected weak insensitivity property does provide a guarantee on the performance of Greedy-min for an aggregation rule.
 
    \vspace{3mm}
 \item 
 \textbf{Performance of Random-poll.}
 As mentioned earlier, the performance of Random-poll is based on an average over several runs; the variance in performance was very high for low values of $k$, and the worst case performance was unacceptable. The variance was acceptable for higher values of $k$.
 One reason for its performance being not very bad on average could be the low standard deviation of the mean distances (see Section \ref{sec:model_pairs}).
 
    \vspace{3mm}
 \item 
 \textbf{Performance of Between-cen.}
 Between-cen lagged behind Greedy-sum when the size of the representative set was small. However, it performed at par with or at times better  when the size of the representative set was moderate to high.
 Both Between-cen and Greedy-sum are based on the idea of shortest paths, albeit with different additive operators (simple addition in case of Between-cen versus MSM-SP in case of Greedy-sum) and with the difference that Between-cen concerns intermediary nodes and Greedy-sum concerns end nodes. 
 
    \vspace{3mm}
 \item 
 \textbf{Performance of Degree-cen.}
 Degree-cen showed a perfect balance between performance and running time. This demonstrates that high degree nodes indeed serve as good representatives of the population.

    \vspace{3mm}
 \item
 \textbf{Performance of other centrality measures.}
 Centrality measures (not included in the plots) such as Katz and eigenvector (belonging to the Bonanich family) as well as PageRank performed almost at par with Between-cen for most values of $k$. 
 Katz and eigenvector centralities, in particular, selected representative nodes which often were neighbors of each other; this is intuitively undesirable for the problem of sampling representatives which should ideally represent different parts of the network.
 Also, as discussed earlier, these measures reweigh edges,
 which may be detrimental to their performance.

    \vspace{3mm}
 \item 
 \textbf{Non-monotonicity of error plots.}
 The error plot need not be monotone decreasing with the representative set, that is, adding a node to a representative set need not reduce the error. For example, let $s_1$ be the selected representative node for $k=1$, and $s_2$ be the node added when $k=2$. If we aggregate preferences of the representative nodes in an unweighted way, it is clear that if $s_1$ is truly a good representative of almost the entire population and $s_2$ is a good representative of only a section of the population, weighing their preferences equally ($R=Q$) would lead to more error than considering $s_1$ alone (with $k=1$). This is precisely the reason why we employed the method of weighing their preferences differently while aggregating their preferences.  Weighing their preferences proportional to the number of nodes they represent ($R=Q'$) based on the deduced similarity matrix, was used as only a heuristic and so does not guarantee that the error would reduce with an increasing $k$. 
 
    \vspace{3mm}
 \item 
 \textbf{Role of social network.}
 We observe that algorithms which consider the underlying social network perform better than random polling, implying that the network plays a role and should be considered while determining representatives. 
 We next provide a more detailed insight.

 \end{itemize}

  \begin{figure*}
  \begin{tabular}{ccc}
  \hspace{-3mm}
  \includegraphics[scale=.45]{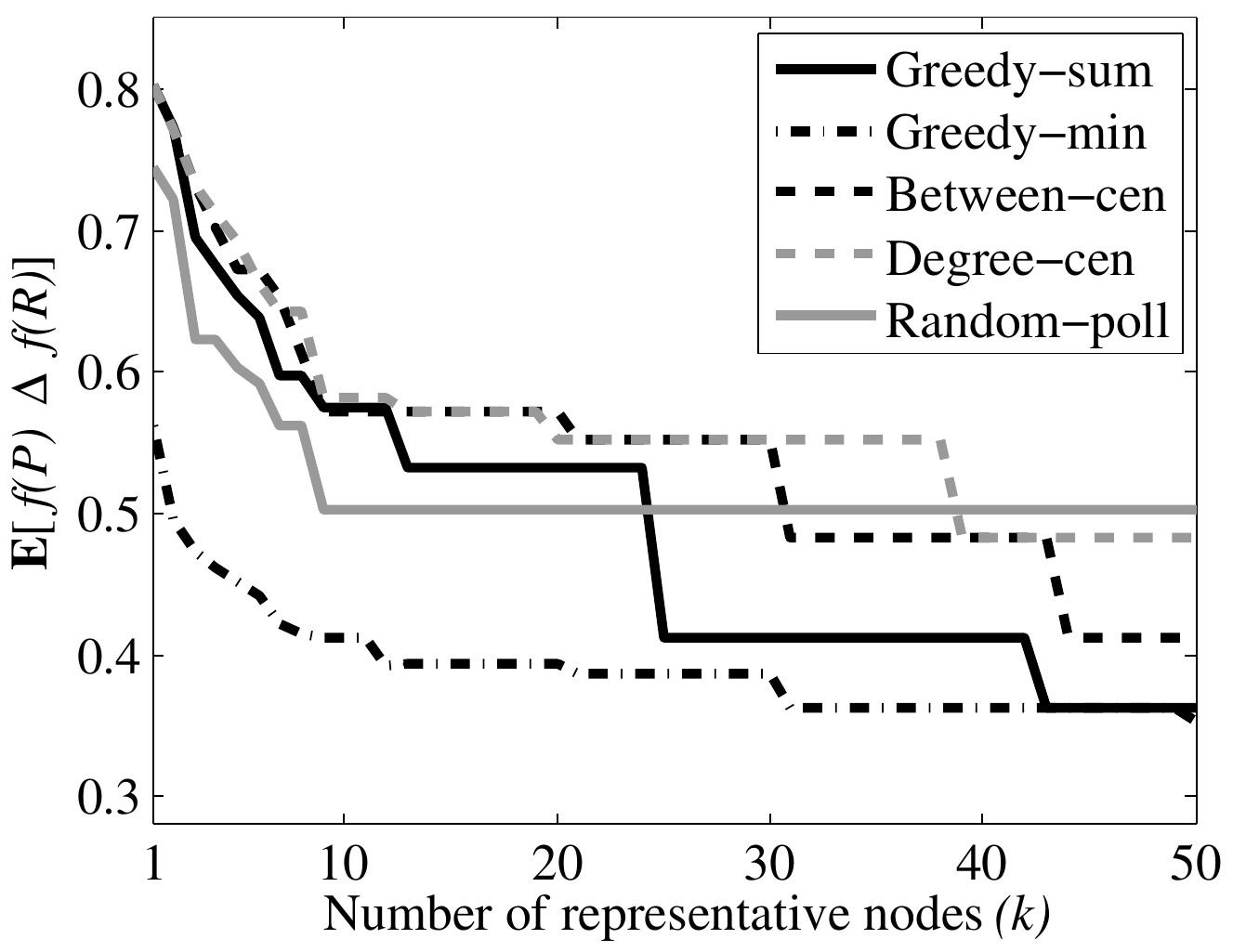}
  &
  \hspace{-5mm}
  \includegraphics[scale=.45]{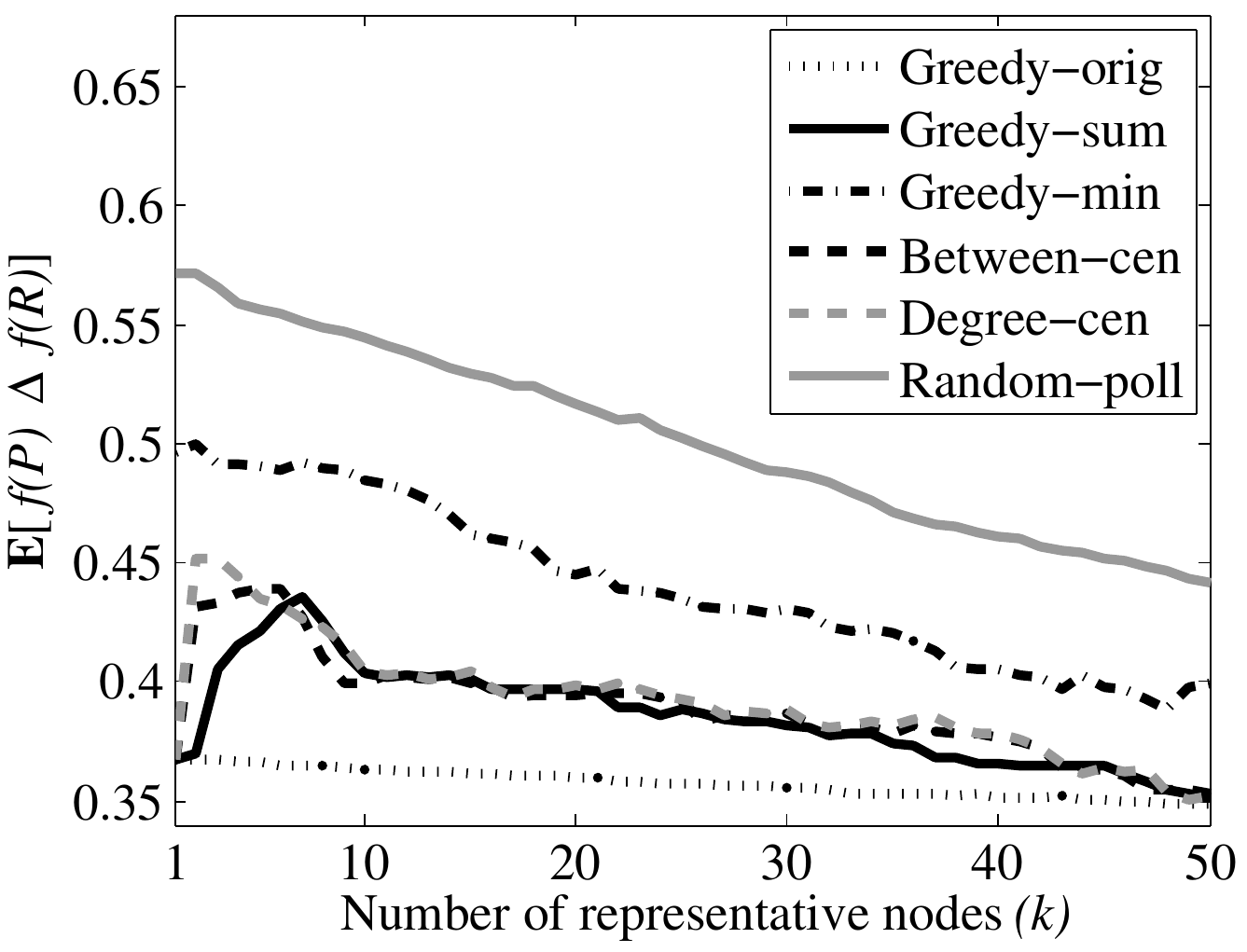}
  &
  \hspace{-5mm}
  \includegraphics[scale=.45]{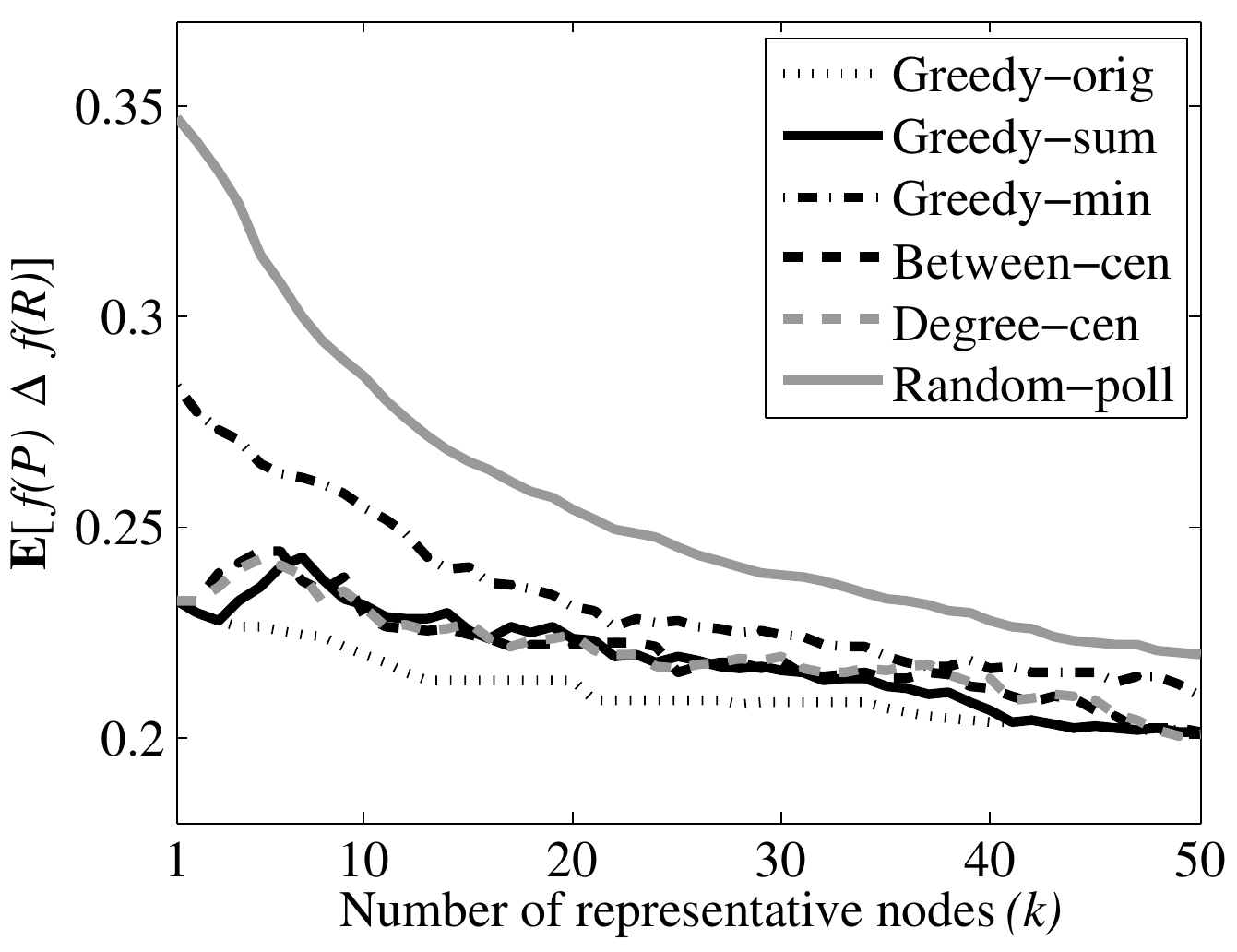}
  \\
  (a)  Random Dictatorship (worst case) 
  &
  (b) Plurality (Personal topics) 
  &
  (c) Plurality (Social topics) 
  \end{tabular}
  \caption{
  Comparison among algorithms 
  under other settings
  }
  \label{fig:extra}
  \end{figure*}

 \subsection{Personal versus Social Topics}
 \label{sec:pvss}

 We have focused on aggregating preferences across all topics, without classifying them into different types. 
 Now, we provide a preliminary analysis of personal versus social type of topics.
 It is to be noted that the model fitting (as discussed in Section \ref{sec:model_pairs}) was employed based on only four topics of each type that were available from our Facebook data. So the obtained results (Figures \ref{fig:extra}(b-c)) are provided with low confidence, but have qualitative implications.

 All the algorithms showed relatively low errors for social topics.
 Random-poll performed very well on average for sufficiently high values of $k$.
 This could be attributed to a lower average mean distance for social topics (0.30), resulting in nodes being more similar to each other; so nodes chosen at random are likely to be more similar to most other nodes.
 However, it performed badly while aggregating preferences with respect to personal topics, which has a higher standard deviation (0.12)~and also a higher average mean distance (0.40). Also, the variance in performance of Random-poll was unacceptably high for lower values of $k$, but acceptable for higher values. 
 So though the usage of Random-poll seems undesirable in general, 
 it is justified for social topics with a reasonable sample size.

 A high level of similarity between unconnected nodes with respect to social topics could be attributed to the impact of news and other common channels. It may also be justified by a theory of political communication~\cite{huckfeldt1995political} which stresses the importance of citizen discussion beyond the boundaries of cohesive groups for the dissemination of public opinion.

 
 \section{Conclusion}
 \label{sec:conclusion_pasn}
 
 This paper focused on two subproblems with respect to preference aggregation in social networks, namely, (a) how preferences are spread and (b) how to determine the best set of representative nodes.
 We started by motivating both these problems. 
 Based on our Facebook dataset,
  we developed a number of simple and natural models, of which RPM-S showed a good balance between accuracy and running time; while MSM-SP was observed to be the best model when our objective was to deduce the mean distances between all pairs of nodes, and not the preferences themselves.
 
 We formulated an objective function for representative-set selection and followed it up with two surrogate objective functions for practical usage. 
 We then proposed algorithms for selecting best representatives, wherein we provided a guarantee on the performance of the Greedy-min algorithm, subject to the aggregation rule satisfying the expected weak insensitivity property; we also studied the desirable properties of the Greedy-sum algorithm.
 We also observed that degree centrality heuristic performed very well, thus showing the ability of high-degree nodes to serve as good representatives of the population. 
 Our preliminary analysis also suggested that selecting representatives based on social network is advantageous for aggregating preferences related to personal topics, while random polling with a reasonable sample size is good enough for aggregating preferences related to social topics.

 \subsection{Future Work}
 \label{sec:future}

 It is intuitive that the network structure would affect the ease of sampling best representatives and how well they represent the population. 
 For instance, a tightly-knit network would likely consist of nodes with similar preferences, thus requiring a small number of representatives,
 while
 a sparse network would require a higher number.
 A network with a larger diameter also would require a higher number of representatives distributed across the network.
 For a network in which there exist natural communities, we could have representatives from each community and the number of representatives from a community would depend on the size of the community. 
 In general, a good representative set would consist of nodes distributed over the network, so that they represent different sections of the network. 
 Hence it would be interesting to study how the network structure influences the ease of determining representative set and its effectiveness.

 Network compression techniques could be considered for solving the studied problem;
 the preferences with respect to a number of topics over their sets of alternatives, could be viewed as constituting the state of a node.
 Some of the relevant techniques that could be useful are 
 efficient data representations for large high-dimensional data \cite{belkin2003laplacian} using spectral graph theory and graph Laplacian \cite{chung1997spectral},
 multiple transforms for data indexed by graphs
 \cite{sandryhaila2013discrete}, etc.
 Furthermore, with the emergence of online social networks, it is possible to obtain detailed attributes and interests of a node based on its public profile, liked pages, followed events and personalities, shared posts, etc. 
 Network compression techniques could be used for arriving at a concise  set of attributes and nodes, for representing network data.

 We believe the expected weak insensitivity property introduced in this paper, could be of interest to the social choice theory community and has a scope of further study.
 It will also be interesting to study how one should weigh the preferences of the nodes in a representative set, so that the error is a monotone decreasing function of the set.
 One could consider the scenario when nodes are strategic while reporting their preferences.
 Alternative models for the spread of preferences in a network, given the edge similarities, could be studied.
  Considering the attributes of nodes and alternatives~\cite{grum2013uai} in addition to the underlying social network for determining the best representatives, is another direction worth exploring.
 
 \section*{Acknowledgments}
The original version of this paper is accepted for publication in IEEE Transactions on Network Science and Engineering
(DOI 10.1109/TNSE.2017.2772878).
 A previous version of this paper is published in {\em First AAAI Conference on Human Computation and Crowdsourcing} \cite{dhamal2013scalable}.
 This research was in part supported by an unrestricted research grant from Adobe Labs, Bangalore. We thank Balaji Srinivasan and his team for helpful discussions.
 The first author was supported by IBM Doctoral Fellowship when most of this work was done.  
 Thanks to Akanksha Meghlan, Nilam Tathawadekar, Cressida Hamlet, Marilyn George, Aiswarya S., and Chandana Dasari, Mani Doraisamy, Tharun Niranjan, and Srinidhi Karthik B. S., for helping us with the Facebook app. 
 Thanks to Prabuchandran K. J. for useful discussions.
 Many thanks to Google India, Bangalore (in particular, Ashwani Sharma) for providing us free credits for hosting our app on Google App Engine.
 Also thanks to several of our colleagues for their useful feedback on the app.
 We thank the anonymous reviewers for their useful and insightful comments, which led to improvements across all sections of the paper.

\bibliographystyle{IEEEtran}
\bibliography{PASN_journal_references}

\begin{appendices}

\section{Description of the Facebook App}
\label{app:facebook_app}

\subsection{Overview}
\label{app:overview}

Online social networking sites such as Facebook, Twitter, and Google+ are highly popular in the current age; for instance, Facebook has over 2 billion monthly active users as of 2017. Using such online social networking sites for data collection has become a trend in several research domains. 
When given permission by a user, it is easy to obtain access to the user's friend list, birthday, public profile, and other relevant information using Facebook APIs.
Furthermore, Facebook provides a facility to its users to invite their friends to use any particular application, and hence propagate it. 
Owing to these reasons, in order to obtain the data for our purpose, we developed a Facebook application titled {\em The Perfect Representer} for eliciting the preferences of users over a set of alternatives for a wide range of topics, as well as to obtain the underlying social network. 
Once a user logged into the app, the welcome page as shown in Figure~\ref{fig:intro_page} was presented, which described to the user what was to be expected from the app.

First, the user would have to give his/her preferences over 5 alternatives for 8 topics, using a drag`n'drop interface as shown in Figure~\ref{fig:questions_page}.
The user was given the option of skipping any particular topic if he/she wished to.
The topics, which were broadly classified into personal and social types, and their alternatives are listed in Table~\ref{tab:questions} (the ordering of alternatives from top to bottom is based on the aggregate preference computed from our data as per the Borda count aggregation rule).
From a user's viewpoint, the app gave the user a {\em social centrality score} out of 10, telling how well the user represents the society or how well the user's opinions are aligned with that of the society with respect to the provided preferences.
The score was dynamic and kept on updating as more users used the app (since the aggregate preference itself kept on updating); this score could be posted on the user's timeline.
The user also had an option of viewing how similar his/her preferences were to his/her selected friends. 
Explicit incentives were provided for users to propagate the app either by inviting their friends or sharing on their timelines as well as 
messaging through emails and popular websites
(Figure~\ref{fig:sharing_page}). 

\begin{figure}[t!]
\centering
\includegraphics[scale=.25]{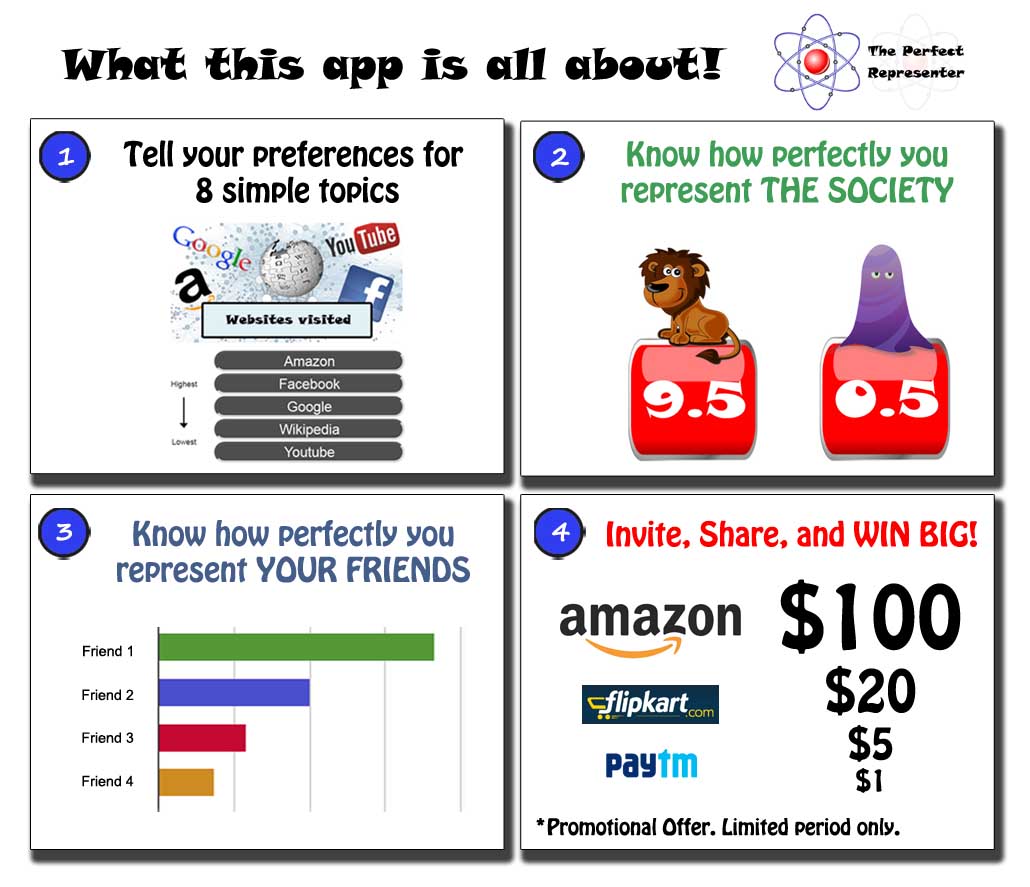}
\caption{Screenshot of the welcome page}
\label{fig:intro_page}
\vspace{4mm}
\end{figure}

\begin{figure}[t!]
\centering
\includegraphics[scale=.53]{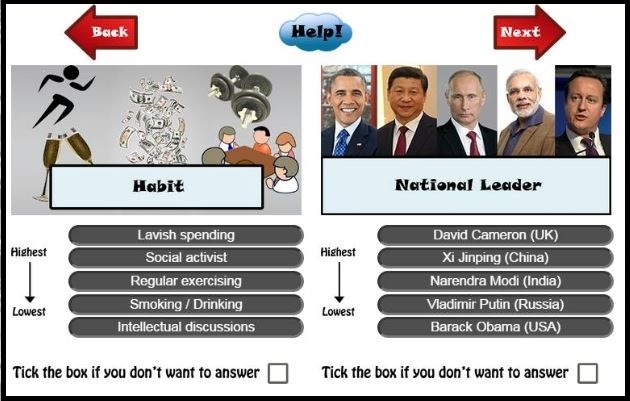}
\caption{Screenshot of a page with topics and their alternatives}
\label{fig:questions_page}
\end{figure}

To host our application, we used Google App Engine, which provides a Cloud platform for facilitating large number of hits at any given time as well as large and fast storage.

\begin{table*}
\centering
\caption{Topics and alternatives in the Facebook app}
\begin{footnotesize}
\begin{tabular}{|c|c|c|c||c|c|c|c|}
\hline
\multicolumn{4}{|c||}{\T \B Personal}	&	\multicolumn{4}{c|}{\T \B Social}	
\\ \hline \T \B
Hangout	&	Chatting 			&	Facebook 	&	Lifestyle	&	Website		&	Government 		&	Serious 	&	Leader
\\ \T \B
Place			&	App					& Activity		&						&	visited		&	Investment			& Crime	&
\\ \hline \T \B
Friend's place 						&	WhatsApp			&	Viewing Posts				 & 		Intellectual		&	Google		&	Education					&	Rape		&		N. Modi (India) 
\\ \T \B
Adventure Park	&	Facebook	&	Chatting 					 & Exercising  &	Facebook	&		Agriculture		&	Terrorism 			& B. Obama (USA)
\\ \T \B
Trekking				&	Hangouts 					&	Posting 	& Social activist						&	Youtube		&	Infrastructure 				&	Murder				&	D. Cameron (UK) 
\\ \T \B
Mall	&	SMS		&	Games/Apps		& Lavish					&	Wikipedia	&	Military						&	Corruption	&	V. Putin (Russia)
\\ \T \B
Historical Place		&	Skype					&	Marketing			& Smoking			&	Amazon	&	Space explore	&	Extortion		&	X. Jinping (China)
\\ \hline
\end{tabular}
\end{footnotesize}
\label{tab:questions}
\end{table*}

\subsection{The Scores}
\label{sec:scores}

Let $\mathbb{A}$ be the set of alternatives and $r=|\mathbb{A}|$.
Let $\tilde{c}(p,q)$ be the similarity between preferences $p$ and $q$. In our app, we implement $\tilde{c}(p,q)$ to be the normalized Footrule similarity
as a computationally efficient approximation \cite{diaconis1977spearman} to normalized Kendall-Tau similarity (which is used throughout our study) for scoring the users in real-time;
the normalized Kendall-Tau similarities are computed offline.
Let $w_a^p$ denote the position of alternative $a$ in preference $p$. The Footrule distance between preferences $p$ and $q$ is given by 
$ 
\sum_{a \in \mathbb{A}} |w_a^p - w_a^q|
$. 
With $r$ being the number of alternatives, it can be shown that the maximum possible Footrule distance is
$ 
2 \lceil \frac{r}{2}\rceil \lfloor \frac{r}{2} \rfloor
$. 
So the normalized Footrule distance between preferences $p$ and $q$ can be given by
\begin{displaymath}
\tilde{d}(p,q) = 
\frac{\sum_{a \in \mathbb{A}} |w_a^p - w_a^q|}{2 \lceil \frac{r}{2}\rceil \lfloor \frac{r}{2} \rfloor}
\end{displaymath}
and normalized Footrule similarity by $\tilde{c}(p,q) = 1-\tilde{d}(p,q)$.

For example, the normalized Footrule similarity between preferences $p=(A,B,C,D,E)$ and $q=(B,E,C,A,D)$ is $\tilde{c}(p,q)=\left( 1-\frac{|1-4|+|2-1|+|3-3|+|4-5|+|5-2|}{2 \lceil \frac{5}{2}\rceil \lfloor \frac{5}{2} \rfloor} \right) = \frac{1}{3}$.

\begin{figure}[t!]
\centering
\includegraphics[scale=.195]{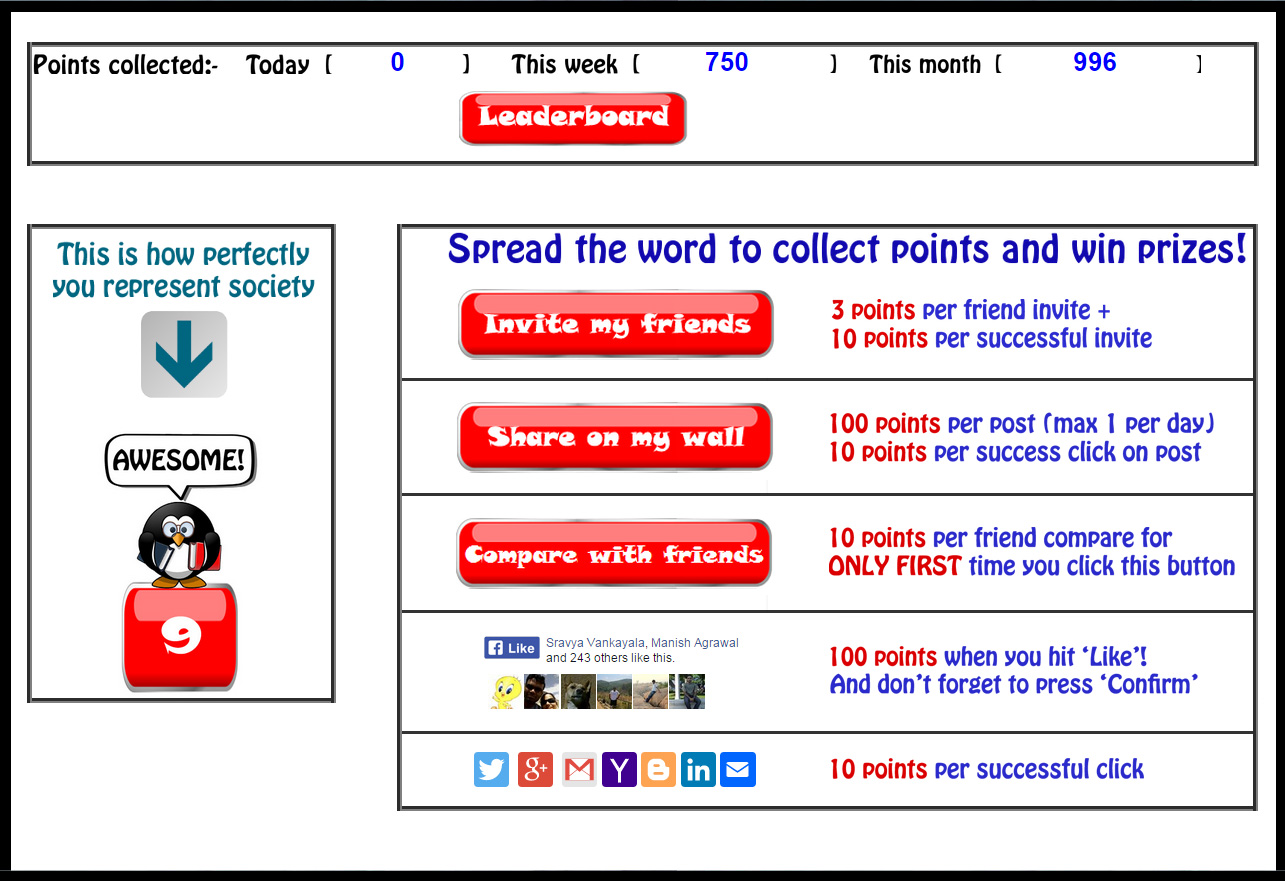}
\caption{Screenshot of the sharing and scoring page}
\label{fig:sharing_page}
\vspace{-1mm}
\end{figure}

\vspace{2mm}
\subsubsection{Social Centrality Score - How Perfectly you Represent the Society?}

Let $p_{it}$ be the preference of node $i$ for topic $t$, and $p_{At}$ be the aggregate preference of the population for topic $t$. For the purpose of our app's implementation, we obtain the aggregate preference using the Borda count rule.
For computing the social centrality, we give each topic $t$, a weight proportional to $n_t$ (the number of users who have given their responses for that topic).
So the fractional score of user $i$ is given by
$ 
\sum_t \left( \frac{n_t}{\sum_t n_t} \right) \tilde{c}(p_{it} , p_{At})
$.

A primary drawback of standard measures such as Kendall-Tau or Footrule distance
 is the way distances between preferences themselves are distributed. For instance, there are several preferences which are at a distance of, say 0.5, from a given preference as compared to those at a distance of, say 0.3, which leads to a bias in distances to be concentrated in the intermediate range. 
Since the distance between two preferences (here, user preference and the aggregate preference) for most topics would be concentrated in the intermediate range, a user would seldom get a very high or a very low score. A mediocre score would, in some sense, act as a hurdle in the way of user sharing the post on his/her timeline. So to promote posting their scores, we used a simple boosting rule (square root) and then enhanced it to the nearest multiple of 0.5, resulting in the final score of 
\begin{displaymath}
\frac{1}{2}\;  \left\lceil 20 \;\sqrt{\sum_t \left( \frac{n_t}{\sum_t n_t} \right) \tilde{c}(p_{it} , p_{At})} \;\right\rceil \;\;\;\text{(out of 10).}
\end{displaymath}

\subsubsection{How Perfectly you Represent your Friends?}

Once a user selected a list of friends to see how similar they are to the user, the app would give the similarity for each friend in terms of percentage. This similarity was also a function of the number of common questions they responded to. So the similarity between nodes $i$ and $j$ in terms of percentage was given by
\begin{displaymath}
100 \left( \frac{\sum_t \tilde{c}(p_{it} , p_{jt})}{\sum_t 1} \right)
\end{displaymath}
where $\tilde{c}(p_{it} , p_{jt})=0$ if either $i$ or $j$ or both did not respond to topic $t$.

\subsection{The Incentive Scheme}
\label{sec:incentives}

A typical active Facebook user uses several apps in a given span of time, and invites his/her friends to use it depending on the nature of the app and the benefits involved. 
In order to ensure a larger reach, it was important to highlight the benefits of propagating our app. We achieved this by designing a simple yet effective incentive scheme for encouraging users to propagate the app by sharing it on their timelines and inviting their friends to use it. 

We incorporated a points system in our app, where suitable points were awarded on a daily, weekly, as well as on an overall basis, for spreading the word about the app through shares, invites, likes, etc. Bonus points were awarded when a referred friend used the app through the link shared by the user. 
 To ensure competitiveness in sharing and inviting, the daily and weekly `top 10' point collectors were updated in real-time and the winners were declared at 12 noon GMT (daily) and Mondays 12 noon GMT (weekly). 
A cutoff was set on the number of points to be eligible to get a prize. 
Users were also given a chance to win a big prize through daily, weekly, and bumper lucky draws, if they crossed a certain amount of points, so that users with less number of friends could also put their effort even though they did not have a chance to make it into the `top 10'.
Prizes were awarded in the form of gift coupons so that getting the prize was in itself, quick as well as hassle-free for the users.

The points structure as well as the links to invite, like, and share were provided on the scoring page (Figure~\ref{fig:sharing_page}), giving the users a clear picture of how to earn points and win prizes.
The lists of daily and weekly winners were displayed on the welcome page and the scoring page.

\begin{figure}[t!]
\hspace{-10mm}
\includegraphics[scale=.3]{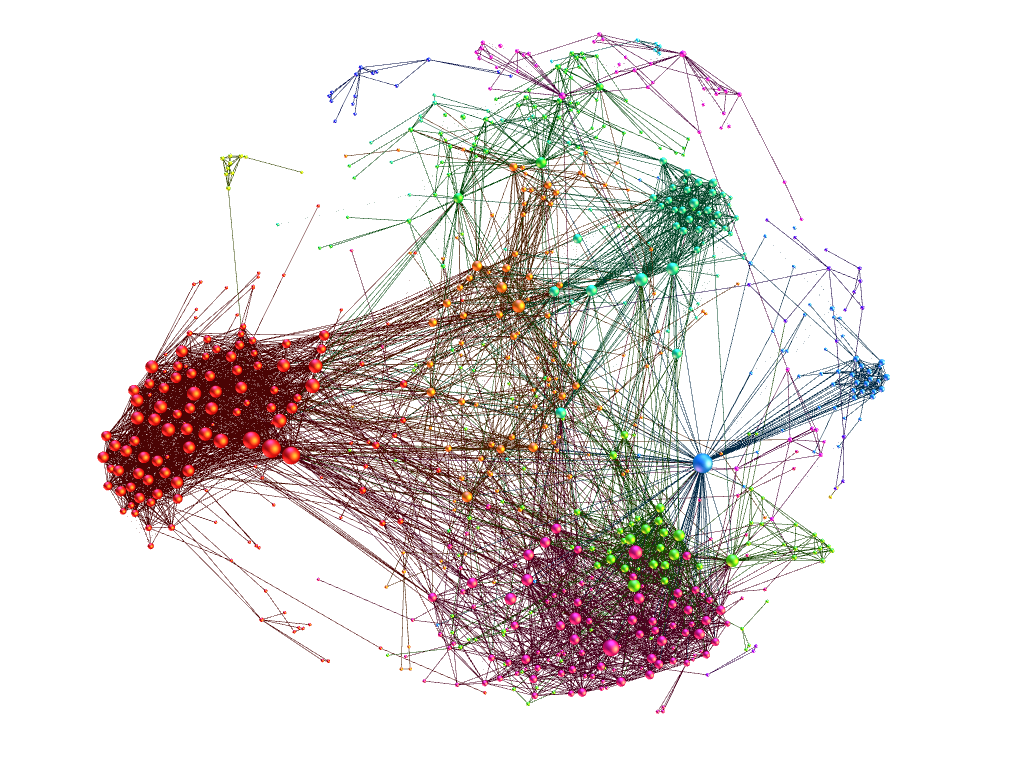}
\vspace{-7mm} 
\caption{Network of the app users}
\label{fig:network_app_users}
\end{figure}

 \begin{table}[t]
 \centering
 \caption{App user statistics}
 \small
 \begin{tabular}{|c|c|c|}
 \hline
 \T \B
         Attribute & Value & Count
 \\ \hline \hline
 \multirow{9}{*}{Age}
 &	13-18 &	25
 \\
 &	19-21 &	120
 \\
 &	22-24 &	291
 \\
 &	25-28 &	272
 \\
 &	29-32 &	58
 \\
 &	33-40 &	37
 \\
 &	41-50 &	14
 \\
 &	51-70 &	12
 \\
 &	Indeterminate &	15
 \\ \hline
 \multirow{7}{*}{Location}
 &	North America	&	44
 \\
 &	Europe	&	18
 \\
 &	Middle east	&	12
 \\
 &	India	&	749
 \\
 &	East Asia	&	8
 \\
 &	Australia	&	5
 \\
 &	Indeterminate	&	8
 \\ \hline
 \multirow{3}{*}{Gender}
 & Male & 565
 \\
 & Female & 269
 \\
 & Indeterminate & 10
 \\ \hline
 \end{tabular}
 \label{tab:user_stats}
 \end{table}

\subsection{Preprocessing of the Obtained Dataset}

The obtained dataset consisted of 1006 nodes and 7112 edges.
Figure~\ref{fig:network_app_users} shows the network of app users and Table \ref{tab:user_stats} shows some user statistics.
It was necessary to preprocess the data before using it for model-fitting. For instance, we eliminated nodes which provided responses to fewer than 6 topics so as to consider only those nodes which responded to sufficient number of topics.
Further, in order to observe the network effect, it was necessary that all nodes belonged to the same component; so we considered only the giant component which consisted of 844 nodes and 6129 edges.

The obtained network is a subgraph of nodes who have used the Facebook app to give their preferences for the asked topics. 
The users had the option of inviting their friends to use the app and sharing their app social centrality scores on their timelines, and were provided incentives for the same.
It is well known that users generally use an app suggested by their close friends and also invite their close friends to use an app. 
Similarly, the social centrality posts shared by users are more likely to be viewed and clicked on by their close friends.
We observed that connected nodes had mean dissimilarity ($=\avg_{(i,j)\in E} d(i,j)$) of 0.24
and unconnected nodes had mean dissimilarity of 0.37.
Due to the way that the app usage is propagated, it is likely for most nodes to have their close friends included in the network. This is an intuition behind the observed homophily in the obtained network.

\section{Proof of Proposition~\ref{prop:nphard}}
\label{app:nphard_proof}
\begin{customprop}{\ref{prop:nphard}}
Given constants $\chi$ and $\omega$, \\
(a) it is NP-hard to determine whether there exists a set $M$ consisting of $k$ nodes such that $\rho(M) \geq \chi$, and \\
(b) it is NP-hard to determine whether there exists a set $M$ consisting of $k$ nodes such that $\psi(M) \geq \omega$.
\end{customprop}
\begin{proof}
We reduce an NP-hard Dominating Set problem instance to the problem under consideration.
Given a graph $G$ of $n$ vertices, the dominating set problem is to determine whether there exists a set $D$ of $k$ vertices such that every vertex not in $D$, is adjacent to at least one vertex in $D$.

Given a dominating set problem instance, we can construct a weighted undirected complete graph $H$ consisting of the same set of vertices as $G$ such that, the weight $c(i,j)$ of an edge $(i,j)$ in $H$ is some high value (say $0.9$) if there is edge $(i,j)$ in $G$, else it is some low value (say $0.6$). 

Now there exists a set $D$ of $k$ vertices in $G$ such that the distance between any vertex in $G$ and any vertex in $D$ is at most one, if and only if there exists a set $M$ of $k$ vertices in $H$ such that $\rho(M) \geq 0.9$ or $\psi(M) \geq k+0.9(n-k)$. Here $\chi=0.9$ and $\omega=k+0.9(n-k)$.
This shows that the NP-hard dominating set problem is a special case of the problems under consideration, hence the result.
\end{proof}

\section{Proof of Proposition~\ref{prop:submodularity}}
\label{app:submod_proof}
\begin{customprop}{\ref{prop:submodularity}}
The objective functions $\rho(\cdot)$ and $\psi(\cdot)$ 
are non-negative, monotone, and submodular.
\end{customprop}
\begin{proof}
We prove the properties in detail for $\psi(\cdot)$. The proof for $\rho(\cdot)$ is similar.

Consider sets $S,T$ such that $S \subset T \subset N$ and a node $v \in N \setminus T$. It is clear 
that $\psi(\cdot)$ is non-negative.
Let 
$x_i = c(S,i),\;y_i = c(T,i),\;\bar{x_i} = c(S \cup \{v\} ,i),\;\bar{y_i} = c(T \cup \{v\} ,i)$.
For any $i \in N$,
\begin{displaymath}
c(S,i) = \max_{j \in S \subseteq T} c(j,i) \leq \max_{j \in T} c(j,i) = c(T,i)
\end{displaymath}
\begin{equation}
\label{eqn:x_y}
\implies x_i \leq y_i
\end{equation}
That is, $\psi(\cdot)$ is monotone. Similarly, it can be shown that
\begin{equation}
\label{eqn:three}
\bar{x_i} \leq \bar{y_i} \; ; \;\;\;
x_i \leq \bar{x_i} \; ; \;\;\;
y_i \leq \bar{y_i}
\end{equation}
\begin{eqnarray}
\text{Now,} \; \;\;
y_i < \bar{y_i} 
\label{eqn:maxT} & \implies & k \notin \argmax_{j \in T \cup \{v\}} c(j,i) \text{ } \forall k \in T \text{ }\\
\label{eqn:maxS} & \implies & k \notin \argmax_{j \in S \cup \{v\}} c(j,i) \text{ } \forall k \in S \subseteq T \text{ }\;\;\;\;\;\\
\label{eqn:y_ybar_x_xbar}
& \implies & x_i < \bar{x_i} 
\end{eqnarray}
The contrapositive of the above, from Inequalities~(\ref{eqn:three}) is
\begin{equation}
\label{eqn:contra}
x_i = \bar{x_i} \implies y_i = \bar{y_i} 
\end{equation}
Also, from Implications~(\ref{eqn:maxT}) and (\ref{eqn:maxS}),
\begin{eqnarray}
\nonumber 
y_i < \bar{y_i} 
& \implies & \{v\} = \argmax_{j \in T \cup \{v\}} c(j,i) = \argmax_{j \in S \cup \{v\}} c(j,i) \\
\label{eqn:xbar_equals_ybar} & \implies & \bar{x_i} = \bar{y_i}
\end{eqnarray}
Now from Inequalities~(\ref{eqn:three}), depending on node $i$, four cases arise that relate the values of $\bar{x_i} - x_i$ and $\bar{y_i} - y_i$.
\begin{description}
\item[Case 1:] ~$x_i = \bar{x_i}$ and $y_i = \bar{y_i}$:\\
In case of such an $i$, we have $\bar{x_i} - x_i = \bar{y_i} - y_i$
\item[Case 2:] ~$x_i = \bar{x_i}$ and $y_i < \bar{y_i}$:\\
By Implication~(\ref{eqn:contra}), there does not exist such an $i$.
\item[Case 3:] ~$x_i < \bar{x_i}$ and $y_i = \bar{y_i}$:\\
In case of such an $i$, we have $\bar{x_i} - x_i > \bar{y_i} - y_i$
\item[Case 4:] ~$x_i < \bar{x_i}$ and $y_i < \bar{y_i}$:
For such an $i$,
\begin{eqnarray}
\hspace{-4mm}
\nonumber
\bar{x_i} - x_i &=& \bar{y_i} - x_i \;\;\;\text{ (from~(\ref{eqn:y_ybar_x_xbar}) and (\ref{eqn:xbar_equals_ybar}))} \\
\nonumber
&\geq& \bar{y_i} - y_i \;\;\;\text{ (from Inequality~(\ref{eqn:x_y}))} 
\end{eqnarray}
\end{description} 

From the above cases, we have 
\begin{eqnarray}
\hspace{-4mm}
\nonumber
& & \bar{x_i} - x_i \geq \bar{y_i} - y_i, \;\;\;\forall i \in N \\
\nonumber
&\implies& \sum_{i \in N} (\bar{x_i} - x_i) \geq \sum_{i \in N} (\bar{y_i} - y_i) \\
\nonumber
&\implies& \sum_{i \in N} \bar{x_i} - \sum_{i \in N} x_i 
\geq \sum_{i \in N} \bar{y_i} - \sum_{i \in N} y_i \\
\nonumber
&\implies& \psi(S \cup \{v\}) - \psi(S) \geq \psi(T \cup \{v\}) - \psi(T)
\end{eqnarray}

As the proof is valid for any $v \in N \setminus T$ and for any $S,T$ such that $S \subset T \subset N$, the result is proved.
\end{proof}

\section{Proof of Theorem~\ref{shapeqnu}}
\label{app:shapeqnu}
\begin{customthm}{\ref{shapeqnu}}
 For the TU game defined by $\nu(S) = \sum_{\substack{i,j \in S\\ i \neq j}} c(i,j) $,
$ 
\nonumber
 \phi(\nu)=Nu(\nu)=Gv(\nu)=\tau(\nu).
$ 
\end{customthm}
\begin{proof}
The characteristic function $\nu(S) = \sum_{\substack{i,j \in S\\ i \neq j}} c(i,j) $,
when $|S|=2$ where $S=\{i,j\}$, 
\begin{equation}
\label{eqn:for2}
\nu(\{i,j\})=c(i,j)
\end{equation}
\begin{equation}
\label{eqn:pair}
\therefore\; \nu(S) = \sum_{\substack{T \subseteq N\\ |T|=2}} \nu(T)
\end{equation}
From Equation~(\ref{eqn:for2}),
 the Shapley value 
 can be rewritten as
\begin{align}
\label{shapleyeqhalf}
\hspace{-2mm}
\phi_j(\nu) = \frac{1}{2} \sum_{\substack{i \in N\\i \neq j}}c(i,j)
= \frac{1}{2} \sum_{\substack{i \in N\\i \neq j}}\nu(\{i,j\})
= \frac{1}{2}\sum_{\substack{S \subseteq N\\ j \in S\\|S|=2}} \nu(S)
\end{align}
~~
The proof for $\phi(\nu)=Nu(\nu)$ follows from \cite{chun2007coincidence}.
  Furthermore, it has been shown in \cite{chun2007coincidence} that, for the TU game satisfying Equation~(\ref{shapleyeqhalf}), for each $S \subseteq N$,
 \begin{equation}
  \nonumber
  \nu(S) - \sum_{i \in S}\phi_i(\nu) = \nu(N\backslash S) - \sum_{i \in N\backslash S}\phi_i(\nu)
 \end{equation}
 \begin{equation}
 \nonumber
 \therefore\;
  \nu(\{i\}) - \phi_i(\nu) = \nu(N\backslash\{i\}) - \sum_{\substack{j\in N\\j\neq i}}\phi_j(\nu)
    \text{ , for $S = \{i\}$}
 \end{equation}
 So, the propensity to disrupt for player $i$ \cite{straffin1993game} for the Shapley value allocation is
 \begin{equation}
 \nonumber
 d_i(\phi(\nu)) = \frac{\sum_{j\in N,j\neq i}\phi_j(\nu) - \nu(N\backslash \{i\})}{\phi_i(\nu) - \nu(\{i\})} = 1
 \end{equation}
 As the propensity to disrupt is equal for all the players ($=1$), 
 this allocation is the Gately point \cite{straffin1993game}, that is,
 $
  \phi(\nu)=Gv(\nu).
  $

Let $M(\nu)=(M_i(\nu))_{i\in N}$ and $m(\nu)=(m_i(\nu))_{i\in N}$, where $M_i(\nu)=\nu(N)-\nu(N\backslash \{i\})$ and $m_i(\nu)=\nu(\{i\})$.
For a convex game, 
$
\tau(\nu)= \lambda M(\nu) + (1-\lambda)m(\nu)
$,
where $\lambda\in [0,1]$ is chosen such that \cite{chun2007coincidence},
\begin{equation}
\label{tausatis}
\sum_{i \in N} \left[ \lambda M_i(\nu) + (1-\lambda) m_i(\nu) \right] = \nu(N)
\vspace{-1mm}
\end{equation}
\begin{equation*}
\text{From (\ref{eqn:pair}), }
M_i(\nu)
= \sum_{\substack{S\subseteq N \\ |S| = 2}}\nu(S) - \sum_{\substack{S\subseteq N\backslash \{i\} \\ |S| = 2}}\nu(S) 
= \sum_{\substack{S\subseteq N \\ i \in S \\ |S| = 2}}\nu(S)
\vspace{-2mm}
\end{equation*}
This, with Equation~(\ref{tausatis}) and the fact that for our game, for all $i$, $m_i(\nu) = \nu(\{i\}) = 0$,
\begin{equation*}
 \nu(N) \;=\; \lambda\sum_{i\in N}M_i(\nu) 
\;=\; \lambda\sum_{i\in N} \sum_{\substack{S\subseteq N \\ i \in S \\ |S| = 2}}\nu(S)
\;=\; 2\lambda\sum_{\substack{S\subseteq N \\ |S| = 2}}\nu(S) \\ 
\end{equation*}
Using Equation~(\ref{eqn:pair}), we get $\lambda = \frac{1}{2}$.
So we have
\begin{equation}
 \nonumber
\tau_i(\nu) = \frac{1}{2}M_i(\nu) + \frac{1}{2}m_i(\nu) = \frac{1}{2}\sum_{\substack{S \subseteq N\\ i \in S\\|S| = 2}} \nu(S)
\end{equation}
This, with 
Equation~(\ref{shapleyeqhalf}), gives
$
\phi(\nu) = \tau(\nu)
$
\end{proof}

\end{appendices}

\end{document}